\newtheorem{theorem}{Theorem}
\newtheorem{definition}{Definition}
\newtheorem{lemma}{Lemma}
\newtheorem{proposition}{Proposition}
\newtheorem{example}{Example}
\newcommand{\eat}[1]{}
\newcommand{\fullversion}[1]{#1}
\newif\if@restonecol
\definecolor{Gray}{gray}{0.95}
\definecolor{LightCyan}{rgb}{0.88,1,1}
\newcommand{\mb}[1]{\ensuremath {\mathbf {#1}}}
\newcommand{\Sk}{\mb{\Psi}}
\newcommand{\sk}{\psi}
\newcommand{\Ska}{{\mb{\Psi^A}}}
\newcommand{\ska}[1]{{\psi_{{#1}}^A}}
\newcommand{\set}[1]{\left\{ #1 \right\}}
\newcommand{\subst}[2]{#1/#2} 
\newcommand{\comment}[1]{\tcp{#1}}
\newcommand{\mcomment}[1]{}
\newcommand{\Sup}{\mathsf{Supp}}
\newcommand{\cb}[2]{\ensuremath{{\mathtt{Cb{#1}}[{#2}]}}}
\newcommand{\cbr}[2]{\ensuremath{{\mathtt{r{#1}}}[{#2}]}}
\newcommand{\true}{\textsf{true}}
\newcommand{\false}{\textsf{false}}
\newcommand{\mono}{\textsc{MonoSkolem}}
\newcommand{\cegar}{\textsc{CegarSkolem}}
\newcommand{\bloqqer}{\textsf{Bloqqer}}
\newcommand{\type}{\textsc{TYPE-}}
\newcommand{\dontprintsemicolon}{} 
\newcommand{\printsemicolon}{} 
\tikzset{
block/.style={
  draw, 
  rectangle, 
  minimum height=0cm, 
  minimum width=2cm, align=center,
  }, 
line/.style={->,>=latex'}
}
\tikzstyle{galivenode}=[circle,fill=green!50!black,thick,inner sep=1pt,minimum size=4mm]
\tikzstyle{ralivenode}=[circle,fill=red!70!black,thick,inner sep=1pt,minimum size=4mm]
\tikzstyle{alivenode}=[circle,fill=black!80,thick,inner sep=1pt,minimum size=4mm]
\tikzstyle{deadnode}=[circle,fill=black!10,thick,inner sep=0pt,minimum size=4mm]
\tikzstyle{rdnode}=[circle,draw,fill=red!10,thick,inner sep=2pt,minimum size=4mm]
\tikzstyle{grnode}=[circle,draw,fill=green!10,thick,inner sep=2pt,minimum size=4mm]
\tikzstyle{ynode}=[rectangle,draw=black,fill=yellow!10,thick,inner sep=3pt,minimum size=4mm]
\tikzstyle{gnode}=[rectangle,fill=green!10,thick,inner sep=1pt,minimum size=4mm]
\tikzstyle{bnode}=[rectangle,fill=red!10,thick,inner sep=1pt,minimum size=4mm]
\newcommand{\Red}[1]{#1}
\DeclareMathOperator{\dom}{dom}
\begin{document}
\title{Skolem Functions for Factored Formulas} 
\author{\IEEEauthorblockN{Ajith K. John}
 \IEEEauthorblockA{HBNI, BARC, India}
  \and
  \IEEEauthorblockN{Shetal Shah}
  \IEEEauthorblockA{IIT Bombay}
  \and
  \IEEEauthorblockN{Supratik Chakraborty}
  \IEEEauthorblockA{IIT Bombay}
  \and
  \IEEEauthorblockN{Ashutosh Trivedi}
  \IEEEauthorblockA{IIT Bombay}
  \and
  \IEEEauthorblockN{S. Akshay}
  \IEEEauthorblockA{IIT Bombay}
}

\maketitle

\begin{abstract}
  Given a propositional formula $F(x, y)$,
  a Skolem function for $x$ is a function $\psi(y)$, such
  that substituting $\psi(y)$ for $x$ in $F$ gives a formula
  semantically equivalent to $\exists x ~F$. Automatically
  generating Skolem functions is of significant interest in several
  applications including certified QBF solving, finding strategies of
  players in games, synthesising circuits and bit-vector programs from
  specifications, disjunctive decomposition of sequential circuits, etc. 
  In many such applications, $F$ is given as a
  conjunction of factors, each of which depends on a small subset of
  variables.  Existing algorithms for Skolem function generation
  ignore any such factored form and treat $F$ as a monolithic
  function.  This presents scalability hurdles in medium to large
  problem instances.  In this paper, we argue that exploiting the
  factored form of $F$ can give significant performance
  improvements in practice when computing Skolem functions.  We
  present a new CEGAR style algorithm for generating Skolem functions
  from factored propositional formulas.  In contrast to earlier work,
  our algorithm neither requires a proof of QBF satisfiability nor
  uses composition of monolithic conjunctions of factors.  We show
  experimentally that our algorithm generates smaller Skolem functions
  and  outperforms 
  state-of-the-art approaches on several large benchmarks. 
\end{abstract}

\section{Introduction}
\label{sec:introduction}

Skolem functions, introduced by Thoralf Skolem in the 1920s, occupy a
central role in mathematical logic.  \Red{Formally, let $F(x,y)$ be a
first-order logic formula, and let $\dom(x)$ and $\dom(y)$ denote the
domains of $x$ and $y$ respectively.} A \emph{Skolem function} for $x$
in $F$ is a function $\psi:\dom(y)\rightarrow \dom(x)$ such that
substituting $\psi(y)$ for $x$ in $F$ yields a formula semantically
equivalent to $\exists x F(x, y)$, i.e. $F(\psi(y), y) \equiv \exists
x F(x, y)$. In this paper, we focus on the case where the formula $F$
is propositional and given as a conjunction of factors.  Classically,
Skolem functions have been used in proving theorems in logic. More
recently, with the advent of fast SAT/SMT solvers, it has been shown
that several practically relevant problems can be encoded as
quantified formulas, and can be solved
\Red{by constructing \emph{realizers} of quantified
variables.  We identify these realizers as specific instances of Skolem
functions, and focus on algorithms for constructing them in this
paper.}

We begin by listing some applications that illustrate the utility
of constructing instances of Skolem functions in practice.
\begin{enumerate}
\item 
\emph{Quantifier elimination}. 
Given a quantified formula $Q x ~F(x, y)$, where
$Q \in \{\exists, \forall\}$, the quantifier elimination problem
requires us to find a quantifier-free formula that is semantically
equivalent to $Q x ~F(x, y)$. Quantifier elimination has important
applications in diverse areas (see, e.g.~\cite{Jian,Gulwani,AMN05} for
a sampling).  It follows from the definition of Skolem function that
eliminating the quantifier from $\exists x F(x, y)$ can be achieved by
substituting $x$ with a Skolem function for $x$.  Since $\forall x
F(x, y)$ can be written as $\neg \exists x \neg F(x, y)$, the same
idea applies in this case too.  In fact, the process can be repeated
in principle to eliminate quantifiers from a formula with arbitrary
quantifier prefix.

\item 
  \emph{Controller Synthesis and Games.}  Control-program synthesis in
  the Ramadge-Wonham~\cite{RW87} framework reduces to games between
  two players---environment and the controller---such that the optimal
  strategy of the controller corresponds to an optimal control
  program.
  The optimal (or winning) strategy of the controller corresponds to 
  choosing 
  values of variables controlled by it such that regardless of the way the
  environment fixes its variables, the resulting play satisfies the controller's
  objective.  
  If the rules of the game are encoded as a propositional formula and if the
  strategy space for both players is finite, the optimal strategy of the
  controller corresponds to finding Skolem functions of variables controlled by it. 
  In fact, for a number of two-player games---such as
  reachability games and safety games~\cite{AMN05},
  tic-tac-toe~\cite{bessiere2006strategic} and chess-like
  games~\cite{ansotegui2005achilles,AMN05}---the problem of deciding a winner can
  be reduced to checking satisfiability of a quantified Boolean formula (QBF), and the problem of finding winning or best-effort
  strategy reduces to Skolem function generation. 

\item 
  \emph{Graph Decomposition.} Skolem functions can be used to compute
  disjunctive decompositions of implicitly specified state transition
  graphs of sequential circuits~\cite{Trivedi}.  The disjunctive
  decomposition problem asks the following question: Given a sequential circuit,
  derive ``component'' sequential circuits, each of which has the same
  state space as the original circuit, but only a subset of
  transitions going out of every state. The components should be such
  that the complete set of state transitions of the original circuit
  is the union of the sets of state transitions of the components.
  Disjunctive decompositions have been shown to be useful in efficient
  reachability analysis~\cite{TCP06}.

\end{enumerate}

There are several other practical applications where Skolem
functions
find use; see, e.g.~\cite{bierre}, for a discussion.  Hence, there is a
growing need for practically efficient and scalable approaches for
generating instances of Skolem functions.
Large and complex representations of the formula $F$ in $\exists x~F$
often present scalability hurdles in generating Skolem functions in
practice.  Interestingly, for several problem instances, the
specification of $F$ is available in a \emph{factored} form, i.e., as
a conjunction of simpler sub-formulas, each of which depends on
a subset of variables appearing in $F$. 
Unfortunately, unlike in the case of disjunction, 
existential quantification does not distribute over conjunction of
sub-formulas. Existing algorithms therefore
ignore any factored form of $F$ and treat the conjunction of factors
as a single monolithic function. We show in this paper that exploiting
the factored form can help significantly when
generating Skolem functions.

Our main technical contribution is a SAT-based Counter-Example Guided
Abstraction-Refinement (CEGAR) algorithm for generating Skolem
functions from factored formulas.  Unlike competing approaches, our
algorithm exploits the factored representation of a formula and
leverages advances made in SAT-solving technology.  The factored
representation is used to arrive at an initial abstraction of Skolem
functions, while a SAT-solver is used as an oracle to identify
counter-examples that are used to refine the Skolem functions
until no counter-examples exist.  
We present a detailed experimental evaluation of our algorithm
vis-a-vis state-of-the-art algorithms~\cite{Jian,bierre} over a
large class of benchmarks.  We show that on several large problem
instances, we outperform competing algorithms. 

\medskip 
\noindent{\bf Related Work.} 
We are not aware of other techniques for Skolem function generation
that exploit the factored form of a formula.  Earlier work on Skolem
function generation broadly fall in one of four categories.  The first
category includes techniques that extract Skolem functions from a
proof of validity of $\exists X ~F(X,
Y)$~\cite{bierre,jiang2,skizzo,squolem}.  In problem instances where
$\exists X~F(X, Y)$ is valid (and this forms an important sub-class of
problems), these techniques can usually find succinct Skolem functions
if there exists a short proof of validity.  However, in several other
important classes of problems, the
formula $\exists X ~F(X, Y)$ does not evaluate to $\true$ for all values of $Y$, and
techniques in the first category cannot be applied.  The second
category includes techniques that use templates for candidate Skolem
functions~\cite{Gulwani}.  These techniques are effective only when
the set of candidate Skolem functions is known and small.  While this
is a reasonable assumption in some domains~\cite{Gulwani}, it is not
in most other domains.  BDD-based techniques~\cite{bdds} are yet
another way to compute Skolem functions.  Unfortunately, these
techniques are known not to scale well, unless custom-crafted variable
orders are used.  The last category includes techniques that use
cofactors to obtain Skolem functions~\cite{Jian,Trivedi}.  These
techniques do not exploit the factored representation of a formula
and, as we show experimentally, do not scale well to large problem
instances.

\section{Preliminaries}
\label{sec:preliminaries}

We use lower case letters (possibly with subscripts) to denote
propositional variables, and upper case letters to denote sequences of
such variables. We use $0$ and $1$ to denote the propositional
constants $\false$ and $\true$, respectively.  Let $F(X, Y)$ be a
propositional formula, where $X$ and $Y$ denote the sequences of
variables $(x_1, \ldots, x_n)$ and $(y_1, \ldots, y_m)$, respectively.
We are interested in problem instances where $F(X, Y)$ is given as a
conjunction  of factors $f^1(X_1, Y_1), \ldots, f^r(X_r, Y_r)$, where
each $X_i$ (resp., $Y_i$) is a possibly empty sub-sequence of $X$
(resp., $Y$).  For notational convenience, we use $F$ and
$\bigwedge_{j=1}^r f^j$ interchangeably throughout this paper.  The
set of variables in $F$ is called the \emph{support} of $F$, and is
denoted $\Sup(F)$. Given a propositional \Red{formula} $F(X)$ and a
propositional function $\Psi(X)$, we use $F[\subst{x_i}{\Psi(X)}]$, or
simply $F[\subst{x_i}{\Psi}]$, to denote the formula obtained by
substituting every occurrence of the variable $x_i$ in $F$ with
$\Psi(X)$. Since the notions of formulas and functions coincide in
propositional logic, the above is also conventionally
called \emph{function composition}.
If $X$ is a sequence of variables and $x_i$ is a variable, we use
$X\setminus x_i$ to denote the sub-sequence of $X$ obtained by
removing $x_i$ (if present) from $X$.  Abusing notation, we use $X$ to
also denote the set of elements in $X$, when there is no confusion.
A \emph{valuation} or \emph{assignment} $\pi$ of $X$ is a mapping
$\pi: X \rightarrow \{0, 1\}$.
\begin{definition}
  Given a propositional \Red{formula} $F(X, Y)$ and a variable $x_i \in X$,
    a \emph{Skolem function} for $x_i$ in $F(X, Y)$ is a function
    $\psi(X\setminus x_i, Y)$ such that $\exists x_i~ F \equiv
    F[\subst{x_i}{\psi}]$.
\end{definition}
A Skolem function for $x_i$ in $F$ need not be unique.  The following
proposition, which effectively follows from~\cite{Jian,Trivedi},
characterizes the space of all Skolem functions for $x_i$ in $F$.
\begin{proposition}
\label{prop:skolem-space}
A function $\psi(X\setminus x_i, Y)$ is a Skolem function for $x_i$ in
$F(X, Y)$ iff $F[\subst{x_i}{1}] \wedge \neg
F[\subst{x_i}{0}] \Rightarrow \psi$ and $\psi \Rightarrow
F[\subst{x_i}{1}] \vee \neg F[\subst{x_i}{0}]$.
\end{proposition}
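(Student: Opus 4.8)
The plan is to reduce the semantic equivalence in the definition of a Skolem function to the two implications in the statement by means of Shannon expansion. First I would recall two basic facts about propositional formulas. Since $x_i$ ranges over $\{0,1\}$, we have $\exists x_i~F \equiv F[\subst{x_i}{0}] \vee F[\subst{x_i}{1}]$; and by Shannon expansion of $F$ about $x_i$, $F \equiv (x_i \wedge F[\subst{x_i}{1}]) \vee (\neg x_i \wedge F[\subst{x_i}{0}])$. Because neither $\psi$ nor $F[\subst{x_i}{1}]$ nor $F[\subst{x_i}{0}]$ mentions $x_i$, substituting $\psi$ for $x_i$ on both sides of the Shannon expansion yields $F[\subst{x_i}{\psi}] \equiv (\psi \wedge F[\subst{x_i}{1}]) \vee (\neg\psi \wedge F[\subst{x_i}{0}])$. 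All formulas now in play have support contained in $X \setminus x_i$ and $Y$, so semantic equivalence and implication can be checked pointwise over valuations of these variables; for a fixed such valuation write $A$, $B$, $P$ for the truth values of $F[\subst{x_i}{1}]$, $F[\subst{x_i}{0}]$, $\psi$ respectively.

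Next I would observe that each disjunct of $(\psi \wedge F[\subst{x_i}{1}]) \vee (\neg\psi \wedge F[\subst{x_i}{0}])$ implies $F[\subst{x_i}{1}] \vee F[\subst{x_i}{0}]$, so $F[\subst{x_i}{\psi}] \Rightarrow \exists x_i~F$ holds unconditionally. Hence $\psi$ is a Skolem function for $x_i$ iff the converse implication $\exists x_i~F \Rightarrow F[\subst{x_i}{\psi}]$ holds, i.e.\ iff for every valuation of $X \setminus x_i$ and $Y$ we have that $A \vee B = 1$ implies $(P \wedge A) \vee (\neg P \wedge B) = 1$.

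The remaining step is a short case analysis of this implication at each valuation. If $A = 1$ and $B = 1$, the consequent becomes $P \vee \neg P = 1$, so no constraint arises. If $A = 1$ and $B = 0$, the consequent simplifies to $P$, forcing $P = 1$; quantifying over all valuations, this is exactly $F[\subst{x_i}{1}] \wedge \neg F[\subst{x_i}{0}] \Rightarrow \psi$. If $A = 0$ and $B = 1$, the consequent simplifies to $\neg P$, forcing $P = 0$; quantifying over all valuations, this is $\neg F[\subst{x_i}{1}] \wedge F[\subst{x_i}{0}] \Rightarrow \neg\psi$, which is the contrapositive of $\psi \Rightarrow F[\subst{x_i}{1}] \vee \neg F[\subst{x_i}{0}]$. (The case $A = 0,\, B = 0$ has false antecedent and is vacuous.) Conjoining the constraints extracted from the three non-trivial cases gives precisely the two implications of the statement, establishing the ``only if'' direction; conversely, those two implications are exactly what is needed to discharge each case, giving the ``if'' direction. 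I do not expect a genuine obstacle here — the one point requiring care is the rewriting of $F[\subst{x_i}{\psi}]$ via the Shannon cofactors, which relies on $\psi$ being independent of $x_i$, and this is guaranteed because a Skolem function for $x_i$ has support within $X \setminus x_i$ and $Y$.
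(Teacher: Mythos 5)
Your proof is correct. Note that the paper itself offers no proof of this proposition --- it is stated as following ``effectively'' from the cited works of Jiang et al.\ and Trivedi --- so there is no in-paper argument to compare yours against; your write-up supplies a self-contained justification of exactly the kind the citation elides. The route you take is the standard one: Shannon expansion of $F$ about $x_i$, substitution of $\psi$ (legitimate precisely because $\psi$ is independent of $x_i$, as you point out), the observation that $F[\subst{x_i}{\psi}] \Rightarrow \exists x_i\, F$ holds unconditionally, and a four-case pointwise analysis of the converse implication whose nontrivial cases yield exactly $F[\subst{x_i}{1}] \wedge \neg F[\subst{x_i}{0}] \Rightarrow \psi$ and $\psi \Rightarrow F[\subst{x_i}{1}] \vee \neg F[\subst{x_i}{0}]$, in both directions. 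I see no gap.
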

The function $F[\subst{x_i}{0}]$ (resp., $F[\subst{x_i}{1}]$) is
called the \emph{positive} (resp., \emph{negative}) \emph{cofactor} of
$F$ with respect to $x_i$, and plays a central role in the study of
Skolem functions for propositional formulas.  In particular, it follows from
Proposition~\ref{prop:skolem-space} that $F[\subst{x_i}{1}]$ is a
Skolem function for $x_i$ in $F$.  The above definition for a single variable can be naturally extended to a vector
of variables.  Given $F(X, Y)$, a \emph{Skolem function vector} for
$X=(x_1, \ldots, x_n)$ in $F$ is a vector of functions $\Sk =
(\sk_1, \ldots, \sk_n)$ such that $\exists x_1 \ldots x_n~ F \equiv$
$(\cdots(F[\subst{x_1}{\sk_1}]) \cdots [\subst{x_n}{\sk_n}])$.  A
straightforward way to obtain a Skolem function vector $\Sk$ is to
first obtain a Skolem function $\sk_1$ for $x_1$ in $F$, then compute
$F' \equiv \exists x_1\, F$ and obtain a Skolem function $\sk_2$ for
$x_2$ in $F'$, and so on until $\sk_n$ has been obtained.  More
formally, $\sk_i$ can be computed as a Skolem function for $x_i$ in
$\exists x_1 \ldots x_{i-1}\, F$, starting from $\sk_1$ and proceeding
to $\sk_n$.  Note that $\exists x_1 \ldots x_{i-1}\, F$ can itself be
computed as $\left(\cdots \left(F[\subst{x_1}{\sk_1}]\right) \cdots
[\subst{x_{i-1}}{\sk_{i-1}}]\right)$.
\begin{definition}
\label{def:cb0-cb1}
The ``\mb{C}an't-\mb{b}e-\mb{1}'' function for $x_i$ in $F$, denoted
$\cb{1}{x_i}(F)$, is defined to be $\left(\neg \exists x_1 \ldots x_{i-1}\,
F\right)[\subst{x_i}{1}]$.  Similarly, the
``\mb{C}an't-\mb{b}e-\mb{0}'' function for $x_i$ in $F$, denoted
$\cb{0}{x_i}(F)$, is defined to be $\left(\neg \exists x_1 \ldots x_{i-1}\,
F\right)[\subst{x_i}{0}]$. When $X$ and $F$ are clear from the context,
we  use $\cb{1}{i}$ and $\cb{0}{i}$ for $\cb{1}{x_i}(F)$ and
$\cb{0}{x_i}(F)$, respectively.
\end{definition}
\noindent Intuitively, in order to make $F$ evaluate to $1$, we cannot set $x_i$
to $1$ (resp. $0$) whenever the valuation of $\set{x_{i+1}, \ldots,
x_n} \cup Y$ satisfies $\cb{1}{i}$ (resp., $\cb{0}{i}$).  The
following proposition follows from Definition~\ref{def:cb0-cb1} and
from our observation about computing a Skolem function vector
one component at a time.
\begin{proposition}
  \label{cor:xmaps1} 
$\Sk {=} (\neg\cb{1}{1},  \ldots, \neg\cb{1}{n})$ is a Skolem
  function vector for $X$ in $F$.
\end{proposition}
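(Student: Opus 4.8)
The plan is to reduce the statement to a one-variable claim and then read it off Proposition~\ref{prop:skolem-space}. Write $G_1 = F$ and $G_i = \exists x_1 \ldots x_{i-1}\, F$ for $i > 1$. By the ``one component at a time'' recipe recalled just before Definition~\ref{def:cb0-cb1}, together with the observation there that $G_i \equiv \big(\cdots(F[\subst{x_1}{\sk_1}])\cdots[\subst{x_{i-1}}{\sk_{i-1}}]\big)$ for any valid choice of the $\sk_j$, it suffices to prove: for every $i \in \{1,\ldots,n\}$, the function $\neg\cb{1}{i}$ is a Skolem function for $x_i$ in $G_i$. Composing these componentwise then yields $\exists x_1\ldots x_n\, F \equiv \big(\cdots(F[\subst{x_1}{\neg\cb{1}{1}}])\cdots[\subst{x_n}{\neg\cb{1}{n}}]\big)$, which is precisely the assertion that $\Sk = (\neg\cb{1}{1},\ldots,\neg\cb{1}{n})$ is a Skolem function vector for $X$ in $F$.

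First I would put $\neg\cb{1}{i}$ into a convenient form. Substituting a constant for a variable commutes with the Boolean connectives, so from Definition~\ref{def:cb0-cb1} we get $\cb{1}{i} = (\neg G_i)[\subst{x_i}{1}] = \neg\big(G_i[\subst{x_i}{1}]\big)$, whence $\neg\cb{1}{i} \equiv G_i[\subst{x_i}{1}]$. I would also record the support condition: since $\Sup(G_i) \subseteq \{x_i,\ldots,x_n\}\cup Y$, we have $\Sup\big(G_i[\subst{x_i}{1}]\big) \subseteq \{x_{i+1},\ldots,x_n\}\cup Y$, i.e. $\neg\cb{1}{i}$ is a legal candidate Skolem function for $x_i$ in $G_i$.

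The per-variable claim is now immediate: by the remark following Proposition~\ref{prop:skolem-space} (itself obtained by plugging $\psi = F[\subst{x_i}{1}]$ into the two implications there), $G_i[\subst{x_i}{1}]$ is a Skolem function for $x_i$ in $G_i$, and therefore so is the equivalent function $\neg\cb{1}{i}$. The degenerate case in which $x_i$ does not occur in $G_i$ is harmless, since then $G_i[\subst{x_i}{1}] = G_i \equiv \exists x_i\, G_i$ and the implications of Proposition~\ref{prop:skolem-space} hold trivially. Combining the per-variable claim for all $i$ with the reduction of the first paragraph completes the proof.

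I do not anticipate a genuine obstacle; the argument is essentially definition-unwinding. The two points that warrant care are (i) the cancellation $\neg\cb{1}{i} \equiv G_i[\subst{x_i}{1}]$, which relies on substitution distributing over the outermost negation in the definition of $\cb{1}{i}$, and (ii) checking that the formula handed to stage $i$ of the iterative construction is precisely $G_i = \exists x_1\ldots x_{i-1}\,F$, so that the ``can't-be-$1$'' function appearing in the statement is exactly the one associated with that stage --- which is what the ``Note that \ldots'' observation before Definition~\ref{def:cb0-cb1} supplies.
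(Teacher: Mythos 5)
Your proposal is correct and follows essentially the same route the paper intends: the paper gives no separate proof, stating only that the proposition ``follows from Definition~\ref{def:cb0-cb1} and from our observation about computing a Skolem function vector one component at a time,'' and your argument is exactly that observation spelled out --- the cancellation $\neg\cb{1}{i} \equiv (\exists x_1\ldots x_{i-1}F)[\subst{x_i}{1}]$ plus the remark after Proposition~\ref{prop:skolem-space} that the positive-cofactor-style choice is a valid Skolem function at each stage.
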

Note that the support of $\sk_i$ in $\Sk$, as given by
Proposition~\ref{cor:xmaps1}, is $\set{x_{i+1}, \ldots, x_n} \cup Y$.
If we want a Skolem function vector $\Sk$ such that every component
function has only $Y$ (or a subset thereof) as support, this can be
obtained by repeatedly substituting the Skolem function for every
variable $x_i$ in all other Skolem functions where $x_i$ appears.  We 
denote such a Skolem function vector as $\Sk(Y)$.

\section{A monolithic composition based algorithm}
\label{sec:state-of-the-art}
Our algorithm is motivated in part by cofactor-based techniques for
computing Skolem functions, as proposed by Jiang et al~\cite{Jian} and
Trivedi~\cite{Trivedi}.  Given $F(X, Y) = \bigwedge_{j=1}^r f^j(X_j,
Y_j)$, the techniques of~\cite{Jian,Trivedi} essentially compute a
Skolem function vector $\Sk(Y)$ for $X$ in $F$ as shown in algorithm
$\mono$ (see Algorithm~\ref{alg:skolem-generation-mono}). In this
algorithm, the variables in $X$ are assumed to be ordered by their
indices.  While variable ordering is known to affect the difficulty of
computing Skolem functions~\cite{Jian}, \mcomment{more
citations reqd} we assume w.l.o.g.  that the variables are indexed to
represent a desirable order.  We describe the variable order used in
our study later in Section~\ref{sec:expt}.

$\mono$ works in two phases.  In the first phase, it implements a
straightforward strategy for obtaining a Skolem function vector, as
suggested by Proposition~\ref{cor:xmaps1}.
Specifically, steps $3$ and $4$ of $\mono$ build a monolithic
conjunction $F_i$ of all factors that have $x_i$ in their support,
before computing $\sk_i$.  This restricts the scope of the quantifier
for $x_i$ to the conjunction of these factors.
In Step $6$, we use $\neg\cb{1}{i}$ as a specific choice for the
Skolem function $\psi_i$. After computing $\psi_i$ from $F_i$, step
$7$ discards the factors with $x_i$ in their support, and introduces a
single factor representing $\exists x_i\, F_i$ (computed as
$F_i[\subst{x_i}{\psi_i}]$) in their place.  Note that each $\sk_i$
obtained in this manner has $\set{x_{i+1}, \ldots, x_n} \cup Y$ (or a
subset thereof) as support.  Since we want each Skolem function to
have support $Y$, a second phase of ``reverse'' substitutions is
needed.  In this phase (see Algorithm~\ref{alg:reverse}), the Skolem
function $\sk_n(Y)$ obtained above is substituted for $x_n$ in
$\sk_1, \ldots, \sk_{n-1}$.  This effectively renders all Skolem
functions independent of $x_n$.  The process is then repeated with
$\sk_{n-1}$ substituted for $x_{n-1}$ in $\sk_1, \ldots, \sk_{n-2}$
and so on, until all Skolem functions have been made independent of
$x_1, \ldots, x_n$, and have only $Y$ (or subsets thereof) as support.

$\mono$ can be further refined by combining steps 5 and 6, and
directly defining $\psi_i$ in terms of $F_i$. However, we introduce
the intermediate step using $\cb{0}{i}$ and $\cb{1}{i}$ to motivate
their central role in our approach. 
Note that
instead of $\neg\cb{1}{i}$, we could combine $\cb{1}{i}$ and
$\cb{0}{i}$ in other ways (denoted by
\textsc{Combine}($\cb{0}{i}, \cb{1}{i}$) within comments in
Algorithm~\ref{alg:skolem-generation-mono}) to get $\psi_i$ in Step
$6$.  In fact, Jiang et al~\cite{Jian} compute a Skolem function for
$x_i$ in $F$ as an interpolant of $\neg\cb{1}{i} \wedge \cb{0}{i}$ and
$\cb{1}{i} \wedge\neg\cb{0}{i}$, while Trivedi~\cite{Trivedi} observes
that the function $(\neg\cb{1}{i} \wedge(\cb{0}{i} \vee g)) \;\vee\;
(\cb{1}{i}\wedge\cb{0}{i}\wedge h)$ serves as a Skolem function for
$x_i$ in $F$ where $h$ and $g$ are arbitrary propositional functions
with support in $X\setminus\set{x_i} \cup Y$.  Since computing
interpolants using a SAT solver is often time-intensive and does not
always lead to succinct Skolem functions~\cite{Jian}, we simply use
$\neg\cb{1}{i}$ as a Skolem function in Step $6$.
Proposition~\ref{cor:xmaps1} guarantees the correctness of this
choice.

\begin{algorithm}[t]
  \caption{\label{alg:skolem-generation-mono} $\mono$}
  \KwIn{Prop. formula $F(X, Y) = \bigwedge_{j=1}^{r} f^j(X_j, Y_j)$, where $X = (x_1, \ldots, x_n)$}
  \KwOut{Skolem function vector $\Sk(Y)$}
  \tcp{Phase 1 of algorithm}
  $\mathsf{Factors} := \set{f^j \::\: 1 \le j \le r}$\;
  \For{$i$ in $1$ to $n$}{
    $\mathsf{FactorsWithXi} := \set{f \::\: f \in \mathsf{Factors}, x_i \in \Sup(f)}$\;
    $F_i  := \bigwedge_{f \in \mathsf{FactorsWithXi}} f$\; 
    $\cb{0}{i} := \neg F_i[\subst{x_i}{0}]$;  $\cb{1}{i}:=\neg F_i[\subst{x_i}{1}]$\;
    $\sk_i :=$ $\neg\cb{1}{i}$\;
    \tcp{Generally, $\sk_i$:=\textsc{Combine($\cb{0}{i}, \cb{1}{i}$);}}
    $\mathsf{Factors} :=$ 
    $\left(\mathsf{Factors} \setminus \mathsf{FactorsWithXi}\right) \cup
    \set{F_i[\subst{x_i}{\psi_i}]}$\; 
  }
  \tcp{Phase 2 of algorithm}
  \Return{\textsc{ReverseSubstitute}}($\sk_1,\ldots,\sk_n$)\;
\end{algorithm}
\begin{algorithm}[h!]
  \caption{\label{alg:reverse} \textsc{ReverseSubstitute}}
  \KwIn{Functions $\sk_1(x_2,\ldots,x_n,Y), \sk_2(x_3,\ldots,x_n,Y),\ldots, \sk_n(Y)$}
  \KwOut{Function vector $\Sk(Y)$}
  \For{$i=n$ downto $2$}{
    \lFor{$k=i-1$ downto $1$}{$\sk_k$  = $\sk_k[\subst{x_i}{\sk_{i}}]$}}
  \Return $\Sk(Y) = (\sk_1(Y),\ldots, \sk_n(Y))$\;
\end{algorithm} 

Observe that $\mono$ works with a \emph{monolithic} conjunction
($F_i$) of factors that have $x_i$ in their support.  Specifically, it
composes each such monolithic conjunction $F_i$ with a cofactor of
$F_i$ in Step $7$ to eliminate quantifiers sequentially.  This can
lead to large memory footprints 
and more time-outs 
when used with medium to large benchmarks,
as confirmed by our experiments. 
This motivates us to ask if we can
develop a cofactor-based algorithm that does not suffer from the
above drawbacks of $\mono$.

\section{CEGAR for generating Skolem functions}
\label{sec:algo}

We now present a new CEGAR~\cite{CEGAR-JACM} algorithm for generating
Skolem function vectors, that exploits the factored form of $F(X, Y)$.
Like $\mono$, our new algorithm, named $\cegar$, works in two phases,
and assumes that the variables in $X$ are ordered by their indices.
The first phase of the algorithm consists of the core
abstraction-refinement part, and computes a Skolem function vector
$(\sk_1, \ldots, \sk_n)$, where $\sk_i$ has $\set{x_{i+1}, \ldots,
x_n} \cup Y$, or a subset thereof, as support. Unlike in $\mono$, this
phase avoids composing monolithic conjunctions of factors, yielding
simpler Skolem functions.  The second phase of the algorithm
performs reverse substitutions, similar to that in $\mono$.

Before describing the details of $\cegar$, we introduce some additional notation
and terminology.  
Given  propositional functions (or formulas) $f$
and $g$, we say that $f$ \emph{refines} $g$ and  $g$
\emph{abstracts} $f$ iff $f$ logically implies $g$.  
Given $F(X, Y)$ and a vector of functions $\Ska = (\ska{1}, \ldots, \ska{n})$, we
say that $\Ska$ is an \emph{abstract Skolem function vector} for $X$ in $F$ iff
there exists a Skolem function vector $\Sk = (\sk_1, \ldots, \sk_n)$ for $X$ 
in $F$ such that $~\ska{i}$ abstracts $\sk_i$, for every
$i \in \set{1, \ldots, n}$.  Instead of using $\cb{0}{i}$ and $\cb{1}{i}$ to
compute Skolem functions, as was done in $\mono$, we now use their
\emph{refinements}, denoted $\cbr{0}{i}$ and $\cbr{1}{i}$ respectively, to
compute abstract Skolem functions.  For convenience, we represent
$\cbr{0}{i}$ and $\cbr{1}{i}$ as sets of implicitly disjoined
functions.  Thus, if $\cbr{1}{i}$, viewed as a set, is $\set{g_1, g_2}$,
 then it is $g_1 \vee g_2$ when viewed as a function.
We abuse notation and use $\cbr{1}{i}$ (resp., $\cbr{0}{i}$) to
denote a set of functions or their disjunction, as needed.

\subsection{Overview of our CEGAR algorithm} 
\noindent Algorithm $\cegar$ has two phases. The first phase consists of a CEGAR
loop, while the second does reverse substitutions.  The CEGAR
loop has the following steps.
\begin{itemize}
\item {\bf Initial abstraction and refinement.}
  This step involves constructing refinements of $\cb{0}{i}$ and
  $\cb{1}{i}$ for every $x_i$ in $X$.  Using
  Proposition~\ref{cor:xmaps1}, we can then construct an initial
  abstract Skolem function vector $\Ska$. This step is implemented in
  Algorithm~\ref{alg:init-abs-ref} (\textsc{InitAbsRef}), which
  processes individual factors of $F(X, Y)= \bigwedge_{j=1}^r f^j(X_j,
  Y_j)$ separately, without considering their conjunction. As a
  result, this step is time and memory efficient if the individual
  factors are simple with small representations.
\item 
  {\bf Termination Condition.}  Once \textsc{InitAbsRef} has computed
  $\Ska$, we check whether $\Ska$ is already a Skolem function vector.
  This is achieved by constructing an appropriate propositional
  formula $\varepsilon$, called the ``error formula'' for $\Ska$
  (details in Subsection \ref{subsec:termination}), and checking for
  its satisfiability.  An unsatisfiable formula implies that $\Ska$ is
  a Skolem function vector.  Otherwise, a satisfying assignment $\pi$
  of $\varepsilon$ is used to improve the current refinements of
  $\cb{1}{i}$ and $\cb{0}{i}$ for suitable variables $x_i$.
\item 
  {\bf Counterexample guided abstraction and refinement.}  This step
  is implemented in
  Algorithm~\ref{alg:update-abs-ref}: \textsc{UpdateAbsRef}, and
  computes an improved (i.e., more abstract) refinement of
  $\cb{0}{i}$ and $\cb{1}{i}$ for some $x_i \in X$.  This, in turn,
  leads to a refinement of the abstract Skolem function vector $\Ska$.
\end{itemize}
The overall CEGAR loop starts with the first step and repeats the
second and third steps until a Skolem function vector is obtained.
We now discuss the three steps in detail.

\subsection{Initial Abstraction and Refinement} 
\label{subsec:initabsref}

\begin{algorithm}[t]
  \caption{\label{alg:init-abs-ref} \textsc{InitAbsRef}}
  \KwIn{Prop. formula $F(X, Y) = \bigwedge_{j=1}^{r} f^j(X_j, Y_j)$, where
  $X = (x_1, \ldots, x_n)$}
  \KwOut{Abstract Skolem function vector $\Ska = (\ska{1}, \ldots, \ska{n})$, and
    refinements $\cbr{0}{i}$ and $\cbr{1}{i}$ for each $x_i$ in $X$}  
   \For{$i$ in $1$ to $n$}{$\cbr{0}{i}$ := $\emptyset$; $\cbr{1}{i}$ := $\emptyset$; 
                           \tcp{Initializing}}
   \For{$j$ in $1$ to $r$}{
     $f := f^j$; \comment{for each factor }
     \For{$i$ in $1$ to $n$}{
        \If{$x_i \in \Sup(f)$}{
            $\cbr{0}{i} := \cbr{0}{i} \cup \set{\neg f[\subst{x_i}{0}]}$\;
            $\cbr{1}{i} := \cbr{1}{i} \cup \set{\neg f[\subst{x_i}{1}]}$\;
            \comment{Skolem function for $x_i$ in $f$}
            $\psi_{i,f} := f[\subst{x_i}{1}]$\; 
            $f := f[\subst{x_i}{\psi_{i,f}}]$; \tcp{\Red{$\because f[\subst{x_i}{\psi_{i,f}}] \equiv \exists x_i\, f$}}
        }
     }                
   }

   \For{$i$ in $1$ to $n$}{
      $\ska{i} := \neg\cbr{1}{i}$\;
     \comment{Interpreting $\cbr{1}{i}$ as a function}
   }
   \Return {$\Ska{=} (\ska{1}, \ldots, \ska{n})$ and $\cbr{0}{i}, \cbr{1}{i}$ $\forall x_i {\in} X$}
\end{algorithm}
Algorithm \textsc{InitAbsRef} (see Algorithm~\ref{alg:init-abs-ref})
starts by initializing each $\cbr{1}{i}$ and $\cbr{0}{i}$, viewed as
sets, to the empty set.  Subsequently, it considers each factor $f$ in
$\bigwedge_{j=1}^r f^j(X_j, Y_j)$, and determines the contribution of
$f$ to $\cb{0}{i}$ and $\cb{1}{i}$, for every $x_i$ in the support of
$f$.  Specifically, if $x_i \in
\Sup(f)$, the contribution of $f$ to $\cb{0}{i}$ is 
$\left(\neg \exists x_1 \ldots x_{i-1}\,f\right)[\subst{x_i}{0}]$, and
its contribution to $\cb{1}{i}$ is $\left(\neg \exists x_1 \ldots
x_{i-1}\, f\right)[\subst{x_i}{1}]$.  These contributions are
accumulated in the sets $\cbr{0}{i}$ and $\cbr{1}{i}$, respectively,
and $x_i$ is existentially quantified from $f$. The process is then
repeated with the next variable in the support of $f$.  Once the
contributions from all factors are accumulated in $\cbr{0}{i}$ and
$\cbr{1}{i}$ for each $x_i$ in $X$, \textsc{InitAbsRef} computes an
abstract Skolem function $\ska{i}$ for each $x_i$ in $F$ by
complementing $\cbr{1}{i}$, interpreted as a disjunction of functions.
Note that executing steps $4$ through $10$ of \textsc{InitAbsRef} for
a specific factor $f$ is operationally similar to executing steps $1$
through $7$ of $\mono$ with a singleton set of factors, i.e.,
$\mathsf{Factors} = \set{f}$.  This highlights the key difference
between \textsc{InitAbsRef} and $\mono$: while $\mono$ works with
monolithic conjunctions of factors and their
compositions, \textsc{InitAbsRef} works with individual factors,
without ever considering their conjunctions.
Lemma~\ref{lem:initabsref-correct} asserts the correctness of
\textsc{InitAbsRef}.
\begin{lemma}
\label{lem:initabsref-correct}
The vector $\Ska$ computed by \textsc{InitAbsRef} is an abstract
Skolem function vector for $X$ in $F(X, Y)$.  In addition,
$\cbr{0}{i}$ and $\cbr{1}{i}$ computed by \textsc{InitAbsRef} are
refinements of $\cb{0}{i}(F)$ and $\cb{1}{i}(F)$ for every $x_i$ in
$X$.
\end{lemma}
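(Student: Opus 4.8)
\section*{Proof plan}

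The plan is to prove the two assertions in turn: first that $\cbr{0}{i}$ and $\cbr{1}{i}$ refine $\cb{0}{i}(F)$ and $\cb{1}{i}(F)$, and then to derive the abstract‑Skolem‑vector claim from this refinement together with Proposition~\ref{cor:xmaps1}.

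First I would analyse the contribution that \textsc{InitAbsRef} makes to $\cbr{1}{i}$ (and, symmetrically, $\cbr{0}{i}$) while processing a single factor. Fix $f^j$ and let $f^{j}_{(i)}$ denote the value of the working variable $f$ just before $x_i$ is considered in the inner loop over $i$, so $f^{j}_{(1)} = f^j$. The key loop invariant, which I would prove by induction on $i$, is that $f^{j}_{(i)} \equiv \exists x_1 \ldots x_{i-1}\, f^j$ and $\Sup(f^{j}_{(i)}) \subseteq \set{x_i, \ldots, x_n} \cup Y$. The inductive step rests on two facts: (i) the propositional identity $g[\subst{x_i}{g[\subst{x_i}{1}]}] \equiv g[\subst{x_i}{1}] \vee g[\subst{x_i}{0}] \equiv \exists x_i\, g$ (as noted in the algorithm), so when $x_i \in \Sup(f^{j}_{(i)})$ the update in the last lines of the loop body replaces $f^{j}_{(i)}$ by $\exists x_i\, f^{j}_{(i)}$; and (ii) the substituted function $\psi_{i,f} = f^{j}_{(i)}[\subst{x_i}{1}]$ has support $\subseteq \Sup(f^{j}_{(i)}) \setminus \set{x_i}$, so it reintroduces none of $x_1, \ldots, x_{i-1}$ and the support bound propagates; if $x_i \notin \Sup(f^{j}_{(i)})$ the loop body does nothing and both parts are immediate. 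It follows that any term placed into $\cbr{1}{i}$ while processing $f^j$ equals $\neg f^{j}_{(i)}[\subst{x_i}{1}] \equiv \bigl(\neg \exists x_1 \ldots x_{i-1}\, f^j\bigr)[\subst{x_i}{1}]$, and similarly any term placed into $\cbr{0}{i}$ equals $\bigl(\neg \exists x_1 \ldots x_{i-1}\, f^j\bigr)[\subst{x_i}{0}]$ up to logical equivalence.

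Next I would invoke monotonicity of existential quantification: since $F = \bigwedge_k f^k$ implies $f^j$, we get $\exists x_1 \ldots x_{i-1}\, F \Rightarrow \exists x_1 \ldots x_{i-1}\, f^j$, hence $\neg \exists x_1 \ldots x_{i-1}\, f^j \Rightarrow \neg \exists x_1 \ldots x_{i-1}\, F$, and substituting $x_i \mapsto 1$ preserves this:
\[
  \bigl(\neg \exists x_1 \ldots x_{i-1}\, f^j\bigr)[\subst{x_i}{1}] \;\Rightarrow\; \bigl(\neg \exists x_1 \ldots x_{i-1}\, F\bigr)[\subst{x_i}{1}] \;=\; \cb{1}{i}(F) .
\]
Since every disjunct of $\cbr{1}{i}$ therefore implies $\cb{1}{i}(F)$, we conclude $\cbr{1}{i} \Rightarrow \cb{1}{i}(F)$, i.e.\ $\cbr{1}{i}$ refines $\cb{1}{i}(F)$; the case of $\cbr{0}{i}$ is identical with $0$ in place of $1$. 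This establishes the second assertion. For the first, observe $\ska{i} = \neg\cbr{1}{i}$, so $\cbr{1}{i} \Rightarrow \cb{1}{i}(F)$ gives $\neg\cb{1}{i}(F) \Rightarrow \ska{i}$. By Proposition~\ref{cor:xmaps1}, $\Sk = (\neg\cb{1}{1}(F), \ldots, \neg\cb{1}{n}(F))$ is a Skolem function vector for $X$ in $F$, and each $\ska{i}$ abstracts its $i$-th component; hence $\Ska$ is an abstract Skolem function vector for $X$ in $F$ by definition.

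I expect the only genuinely delicate point to be the loop invariant in the second paragraph: one must verify that the cascade of substitutions performed inside the inner loop really computes $\exists x_1 \ldots x_{i-1}\, f^j$ and never reintroduces an already‑eliminated variable — which is precisely where the fixed ascending order of that loop is used. Once the invariant is in place, the passage from per‑factor terms to the full‑conjunction quantities $\cb{1}{i}(F)$ and $\cb{0}{i}(F)$ is just the one‑directional distributivity of $\exists$ over $\wedge$, and the abstract‑Skolem‑vector conclusion follows immediately from Proposition~\ref{cor:xmaps1}.
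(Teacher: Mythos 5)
Your proof is correct and follows essentially the same route as the paper's: show that every function added to $\cbr{0}{i}$ or $\cbr{1}{i}$ implies $\cb{0}{i}(F)$ or $\cb{1}{i}(F)$ via $F \Rightarrow f^j$, monotonicity of existential quantification and the contrapositive, and then invoke Proposition~\ref{cor:xmaps1} to conclude that $\ska{i}=\neg\cbr{1}{i}$ abstracts the Skolem function $\neg\cb{1}{i}$. The only difference is organizational: you explicitly prove (by induction on $i$, with the support bookkeeping) the fact that the working factor equals $\exists x_1\ldots x_{i-1} f^j$, which the paper merely asserts as ``easily seen,'' while the paper instead structures its induction over the pairs of loop indices; both arguments are sound.
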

\fullversion{
\begin{proof}
Consider the ordered pair $(j, i)$ of loop indices corresponding to
the nested loops in steps $3-10$ and $5-10$ of
algorithm \textsc{InitAbsRef}.  Every update of $\cbr{0}{i}$ and
$\cbr{1}{i}$ in steps $7$ and $8$ of \textsc{InitAbsRef} can be
associated with a unique ordered pair of loop indices.  Define a
linear ordering $\preceq$ on the loop index pairs as: $(j, i) \preceq
(j', i')$ iff $j < j'$, or $j = j'$ and $i \leq i'$.  Note that this
represents the ordering of loop index pairs in successive iterations
of the loop in steps $5-10$ of \textsc{InitAbsRef}.  We use induction
on $(j, i)$, ordered by $\preceq$, to show that $\cbr{0}{i}$ and
$\cbr{1}{i}$, as computed by \textsc{InitAbsRef}, are refinements of
$\cb{0}{i}$ and $\cb{1}{i}$.  The base case follows from the
initialization in steps $1$ and $2$ of \textsc{InitAbsRef}.  To prove
the inductive step, consider an update of $\cbr{0}{i}$ and
$\cbr{1}{i}$ in steps $7$ and $8$, respectively,
of \textsc{InitAbsRef}.  The function $f$ used in steps $7$ and $8$
is easily seen to be $\exists x_i\ldots x_{i-1}\, f^j$.  Since $f^j$
is a factor of $F$, we also have $F \Rightarrow f^j$.  It follows
that $\exists x_i\ldots x_{i-1}\, F \;\Rightarrow\; \exists x_i\ldots
x_{i-1}\, f^j \;\equiv\; f$.  Taking the contrapositive gives $\neg
f \;\Rightarrow\; \neg\exists x_i\ldots x_{i-1}\, F$.  Therefore,
$\neg f[\subst{x_i}{a}] \Rightarrow (\neg \exists x_1\ldots x_{i-1}\,
F)[\subst{x_i}{a}]$ for every propositional constant $a$.  Recalling
the definitions of $\cb{0}{i}$ and $\cb{1}{i}$, we get $\neg
f[\subst{x_i}{0}] \Rightarrow \cb{0}{i}$ and $\neg
f[\subst{x_i}{1}] \Rightarrow \cb{1}{1}$.  By the inductive
hypothesis, $\cbr{0}{i}$ and $\cbr{1}{i}$ are refinements of
$\cb{0}{i}$ and $\cb{1}{i}$ prior to executing step $7$
of \textsc{InitAbsRef}.  Therefore, the updated values of
$\cbr{0}{i}$ and $\cbr{1}{i}$, as computed in steps $7$ and $8$
of \textsc{InitAbsRef}, are also refinements of $\cb{0}{i}$ and
$\cb{1}{i}$.  This completes the induction.

Since $\cbr{1}{i} \Rightarrow \cb{1}{i}$ for every $x_i$ in $X$ when
we reach step $11$ of \textsc{InitAbsRef}, it follows from
Proposition~\ref{cor:xmaps1} that $\ska{i} = \neg\cbr{1}{i}$ abstracts
a Skolem function for $x_i$ in $F$.  Hence, $\Ska$, as computed by
\textsc{InitAbsRef}, is an abstract Skolem function vector for $X$ in $F$.
\end{proof}
}

\subsection{Termination condition}
\label{subsec:termination}
Given $F(X, Y)$ and an abstract Skolem function vector $\Ska$, it may
happen that $\Ska$ is already a Skolem function vector for $X$ in
$F$.  We therefore check if $\Ska$ is a Skolem function vector
before refinement. 
Towards this end, we define the \emph{error formula} for $\Ska$ as 
$F(X', Y) \wedge \bigwedge_{i=1}^n (x_i \Leftrightarrow \ska{i})
\wedge \neg F(X, Y)$,  where $X' {=} (x_1', \ldots, x_n')$ is a sequence
of fresh variables with no variable in common with $X$.  The
first term in the error formula checks if there exists some valuation
of $X$ that renders $\exists Y F(X,Y)$ true.  The second term assigns
variables in $X$ to the values given by the abstract Skolem functions,
and the third term checks if this assignment falsifies the formula $F$.

\begin{lemma}
\label{lem:term}
The error formula for $\Ska$ is unsatisfiable iff $\Ska$ is a Skolem
function vector of $X$ in $F$.
\end{lemma}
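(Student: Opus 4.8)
The plan is to read a satisfying assignment of the error formula as a witness that $\Ska$ fails to behave as a Skolem function vector at some point of $\dom(Y)$, and conversely. The first step is a purely structural observation about the middle conjunct: since each $\ska{i}$ has support contained in $\set{x_{i+1}, \ldots, x_n} \cup Y$, the system $\bigwedge_{i=1}^n (x_i \Leftrightarrow \ska{i})$ is triangular, so by downward induction on $i$ from $n$ to $1$, every valuation $\pi$ of $Y$ extends to a \emph{unique} valuation of $X$, which I will write $\widehat\psi(\pi)$. Moreover this $\widehat\psi$ is exactly the (reverse-substituted) Skolem vector: the iterated composition $(\cdots(F[\subst{x_1}{\ska{1}}])\cdots[\subst{x_n}{\ska{n}}])$ appearing in the definition of a Skolem function vector evaluates at $\pi$ to $F$ evaluated at $X := \widehat\psi(\pi)$ and $Y := \pi$. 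This is just the remark, already made in the text around \textsc{ReverseSubstitute}, that performing the substitutions one variable at a time and recursively rewriting each $\ska{i}$ over $Y$ alone give the same result; a short induction on the number of substitutions performed confirms it.

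Granting this, the ``only if'' direction is immediate: if $\pi$ satisfies the error formula, then, restricting to $X'$ and $Y$, the first conjunct gives that $F(X,Y)$ is true at some valuation of the $X$-variables, so $\exists x_1 \ldots x_n\, F$ holds at the $Y$-part of $\pi$; by the structural observation the middle conjunct forces the $X$-part of $\pi$ to equal $\widehat\psi$ applied to the $Y$-part of $\pi$; and the last conjunct says that $F$, hence the iterated composition, is \emph{false} at this point. So $\exists x_1 \ldots x_n\, F$ and $(\cdots(F[\subst{x_1}{\ska{1}}])\cdots[\subst{x_n}{\ska{n}}])$ disagree on the $Y$-part of $\pi$, and $\Ska$ is not a Skolem function vector.

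For the ``if'' direction I argue the contrapositive. Note that for every valuation $\pi$ of $Y$ we trivially have that the iterated composition being true at $\pi$ implies $\exists x_1 \ldots x_n\, F$ is true at $\pi$, since $\widehat\psi(\pi)$ is a concrete witness. Hence if $\Ska$ is not a Skolem function vector, the only possible failure of the required equivalence is a valuation $\pi$ of $Y$ at which $\exists x_1 \ldots x_n\, F$ is true but the iterated composition is false. Choose a witness $\rho$ of the $X$-variables with $F(\rho, \pi)$ true; then the assignment sending $Y \mapsto \pi$, $X' \mapsto \rho$, and $X \mapsto \widehat\psi(\pi)$ satisfies $F(X', Y)$, satisfies every $x_i \Leftrightarrow \ska{i}$ by the structural observation, and falsifies $F(X, Y)$, so it satisfies the error formula.

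The step I expect to be the main obstacle is the structural observation: making precise that the middle conjunct has a unique triangular solution \emph{and} that this solution is semantically the same object as the nested composition used in the \emph{definition} of a Skolem function vector (rather than some other substitution order). Once that bridge is built, both directions are a mechanical unwinding of the three conjuncts of the error formula, using only the trivial half of ``Skolem function vector'' --- that plugging in concrete outputs can never turn a false $\exists x_1 \ldots x_n\, F$ true --- to collapse the equivalence to the single implication that the error formula is designed to refute.
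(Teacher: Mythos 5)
Your proof is correct and takes essentially the same route as the paper: both directions reduce to reading the three conjuncts of the error formula against the nested composition, with a satisfying assignment witnessing, at its $Y$-part, a valuation where $\exists x_1\ldots x_n F$ holds but the composition under $\Ska$ fails. The only differences are presentational --- you phrase the unsatisfiability direction contrapositively and make explicit, via the triangular system and $\widehat\psi$, the bridge between $\bigwedge_{i=1}^n(x_i \Leftrightarrow \ska{i})$ and the nested composition that the paper compresses into ``standard logic transformations.''
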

\fullversion{
\begin{proof}
Let $\varepsilon$ be the error formula for $\Ska$.  Suppose
$\varepsilon$ is unsatisfiable. By definition of $\varepsilon$, we
have
\[
\forall Y \forall X' \forall X\,
\left( F(X', Y) \Rightarrow \left(\bigwedge_{i=1}^n (x_i \Leftrightarrow\ska{i}) \;\;\Rightarrow\;\; F(X, Y)\right)\right).
\] 
By standard logic transformations, this implies $\forall Y
\left(\exists X' F(X', Y) \,\Rightarrow\, F'(Y)\right)$, where $F'(Y)$
denotes
$\left(\cdots\left(F[\subst{x_1}{\ska{1}}]\right)\cdots[\subst{x_n}{\ska{n}}]\right)$.
Therefore, $\Ska$ is a Skolem function vector for $X$ in $F$.

Suppose $\pi$ is a satisfying assignment of $\varepsilon$.  By
definition of $\varepsilon$, $\pi$ is a satisfying assignment of
$F(X', Y)$ and of $\bigwedge_{i=1}^n\left(x_i \Leftrightarrow
\ska{i}\right) \wedge \neg F(X, Y)$, considered separately.  Thus, the
values of $x_1, \ldots, x_n$ given by $\ska{1}, \ldots, \ska{n}$
respectively, cause $F$ to evaluate to $0$ for the valuation of $Y$ in
$\pi$.  However, there exists a valuation of $X$ (viz. same as that of
$X'$ in $\pi$) that causes $F$ to evaluate to $1$ for the same
valuation of $Y$ in $\pi$.  Hence, $\Ska$ is not a Skolem function
vector for $X$ in $F$, as witnessed by the valuation of $Y$ in $\pi$.
\end{proof}
}

The following example illustrates the role of the error formula.
\begin{example} \Red{Let $X=\{x_1,x_2\}$, $Y=\{y_1,y_2,y_3\}$ in $\exists x_1 x_2 F(X,Y)$ where $F \equiv 
(f_1\wedge f_2 \wedge f_3)$, with $f_1=(\neg x_1 \vee \neg
x_2 \vee \neg y_1)$, $f_2=(x_2\vee \neg y_3 \vee \neg y_2)$,
$f_3=(x_1\vee \neg x_2\vee y_3)$. }

Algorithm \textsc{InitAbsRef} gives $\cbr{1}{1}=(x_2\wedge y_1)$,
$\cbr{0}{1}=(x_2\wedge \neg y_3)$, $\cbr{1}{2}=\false$,
$\cbr{0}{2}=y_3\wedge y_2$. This yields $\ska{1}=(\neg x_2\vee \neg
y_1)$, $\ska{2}=\true$.  Now, while $\ska{1}$ is a correct Skolem
function for $x_1$ in $F$, $\ska{2}$ is not for $x_2$. This is
detected by the satisfiability of the error formula $\varepsilon =
F(x'_1,x'_2,Y)\wedge (x_1=\neg x_2\vee \neg y_1)\wedge
(x_2=1)\wedge \neg F(x_1,x_2,Y)$.  Note that $\neg F(\neg x_2\vee \neg
y_1,1, Y)$ simplifies to $(y_1 \wedge \neg y_3)$, and
$y_1=1,y_2=1,y_3=0,x_1=0,x_2=1,x_1'=0,x_2'=0$ is a satisfying
assignment for $\varepsilon$.

\end{example}

\subsection{Counterexample-guided abstraction and refinement}
\label{subsec:cegar}

Let $\varepsilon$ be the error formula for $\Ska$, and let $\pi$ be a
satisfying assignment of $\varepsilon$.  We call $\pi$ a
\emph{counterexample} of the claim that $\Ska$ is a Skolem function
vector. For every variable $v \in X' \cup X \cup Y$, we use $\pi(v)$
to denote the value of $v$ in $\pi$.  Satisfiability of $\varepsilon$
implies that we need to refine at least one abstract Skolem function
$\ska{i}$ in $\Ska$ to make it a Skolem function vector.  Since
$\ska{i}$ is $\neg\cbr{1}{i}$ in our approach, refining $\ska{i}$ can
be achieved by computing an improved (i.e., more abstract) version of
$\cbr{1}{i}$.
Algorithm \textsc{UpdateAbsRef} implements this idea by using $\pi$ to
determine which $\cbr{1}{i}$ should be rendered abstract by adding
appropriate functions to $\cbr{1}{i}$, viewed as a set.  

Before delving into the details of \textsc{UpdateAbsRef}, we state
some key results.  In the following, we use $\pi \models f$ to denote
that the formula $f$ evaluates to $1$ when the variables in $\Sup(f)$
are set to values given by $\pi$.  If $\pi \models f$, we also say $f$
evaluates to $1$ under $\pi$.  We use $\cbr{0}{i}_{init}$ and
$\cbr{1}{i}_{init}$ to refer to $\cbr{0}{i}$ and $\cbr{1}{i}$, as
computed by algorithm \textsc{InitAbsRef}.  Since
\textsc{UpdateAbsRef} only adds to $\cbr{1}{i}$ and $\cbr{0}{i}$
viewed as sets, it is easy to see that $\cbr{0}{i}_{init} \Rightarrow
\cbr{0}{i}$ and $\cbr{1}{i}_{init} \Rightarrow \cbr{1}{i}$ viewed as
functions (recall these functions are simply disjunctions of elements
in the corresponding sets).
\begin{lemma}
\label{lem:basis-for-update}
Let $\pi$ be a satisfying assignment of the error formula
$\varepsilon$ for $\Ska$.  Then the following hold.
\begin{enumerate}
\item[(a)] $\pi \models \neg\cb{0}{n} \vee \neg\cb{1}{n}$.
\item[(b)] There exists $k \in \{1, \ldots, n-1\}$ s.t., $\pi \models
  \cbr{1}{k} \wedge \cbr{0}{k}$.
\item[(c)] There exists no Skolem function vector $\Sk = (\sk_1,
  \ldots, \sk_n)$ such that $\sk_j \Leftrightarrow \ska{j}$ for all $j$ in
  $\set{k+1, \ldots, n}$.
\item[(d)] There exists $l \in \{k+1, \ldots, n\}$ such that $x_l = 1$
  in $\pi$, and $\pi \models \cb{1}{l} \wedge \neg\cbr{0}{l}$.
\end{enumerate}
\end{lemma}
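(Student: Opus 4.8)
I would prove parts (a)--(d) in order, since each builds on the previous ones. Throughout, fix a satisfying assignment $\pi$ of $\varepsilon = F(X',Y)\wedge\bigwedge_{i=1}^n(x_i\Leftrightarrow\ska{i})\wedge\neg F(X,Y)$. Write $\ska{i}=\neg\cbr{1}{i}$ as in the algorithm, and recall $\cbr{j}{i}\Rightarrow\cb{j}{i}$ from Lemma~\ref{lem:initabsref-correct}.

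\textbf{Part (a).} Recall $\cb{1}{n}=(\neg\exists x_1\ldots x_{n-1}F)[\subst{x_n}{1}]$ and $\cb{0}{n}=(\neg\exists x_1\ldots x_{n-1}F)[\subst{x_n}{0}]$. Since $\pi\models F(X',Y)$, the formula $\exists x_1\ldots x_n F$ holds under the $Y$-part of $\pi$, hence $(\neg\exists x_1\ldots x_{n-1}F)$ cannot hold for \emph{both} $x_n=0$ and $x_n=1$ under $\pi|_Y$; i.e.\ $\pi\models\neg\cb{0}{n}\vee\neg\cb{1}{n}$. This is essentially a restatement of Proposition~\ref{cor:xmaps1} at the last index: $\neg\cb{1}{n}$ is the last Skolem function and the two cofactors of $\exists x_1\ldots x_{n-1}F$ cannot both be empty when the formula is satisfiable.

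\textbf{Part (b).} This is the crux, and I expect it to be the main obstacle. The idea: since $\pi$ witnesses that $\Ska$ is \emph{not} a Skolem function vector, $\pi\models\neg F(X,Y)$ with $x_i=\ska{i}(\pi)$, while $\pi\models F(X',Y)$. I want to find an index $k<n$ where the abstract Skolem function ``goes wrong,'' meaning $\pi$ satisfies both refinements $\cbr{1}{k}$ and $\cbr{0}{k}$ --- intuitively the place where the abstraction is too coarse to pin down a value for $x_k$. The argument is by contradiction: suppose for \emph{every} $k\in\{1,\ldots,n-1\}$ we have $\pi\not\models\cbr{1}{k}\wedge\cbr{0}{k}$. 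Combined with part (a) (the $n$ case) and the fact that $x_i=\ska{i}(\pi)=\neg\cbr{1}{i}(\pi)$, I would argue that then the values $\pi(x_1),\ldots,\pi(x_n)$ must in fact make $F$ true under $\pi|_Y$ --- contradicting $\pi\models\neg F$. To make this rigorous one reasons inductively (in the order $x_n, x_{n-1},\ldots,x_1$, matching the quantifier-elimination order): if at each stage the chosen value $\neg\cbr{1}{i}(\pi)$ avoids the ``can't-be'' constraint, then $\pi$ together with these choices satisfies $\exists x_1\ldots x_{i-1} F$, and pushing this down to $i=1$ gives $\pi\models F(X,Y)$. The delicate point is relating the \emph{refined} functions $\cbr{j}{i}$ that the algorithm actually uses to the exact $\cb{j}{i}$ that govern satisfiability; one must be careful that $\pi\not\models\cbr{1}{k}\wedge\cbr{0}{k}$ for all $k$ together with the $X'$-witness really does force $F(X,Y)$ under $\pi$, using $\cbr{j}{i}\Rightarrow\cb{j}{i}$ in the right direction.

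\textbf{Parts (c) and (d).} For (c): let $k$ be as in (b). If there were a Skolem function vector $\Sk=(\sk_1,\ldots,\sk_n)$ with $\sk_j\Leftrightarrow\ska{j}$ for all $j\in\{k+1,\ldots,n\}$, then by Proposition~\ref{prop:skolem-space} applied componentwise, $\sk_k$ would have to satisfy $\cb{1}{k}\Rightarrow\neg\sk_k$ and $\cb{0}{k}\Rightarrow\sk_k$ (in the appropriate formula $\exists x_1\ldots x_{k-1}F$), which is impossible on any point satisfying $\cb{1}{k}\wedge\cb{0}{k}$; and $\pi\models\cbr{1}{k}\wedge\cbr{0}{k}\Rightarrow\cb{1}{k}\wedge\cb{0}{k}$. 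Since $\ska{j}=\sk_j$ for $j>k$, the residual formula relevant to index $k$ is determined by $\pi$ exactly as in part (b), giving the contradiction --- so no such $\Sk$ exists. For (d): by (c), no Skolem vector agrees with $\ska{j}$ for all $j>k$; take the largest index $l$ at which agreement \emph{fails} in the sense relevant to satisfiability, i.e.\ the largest $l\ge k+1$ such that the value $\pi(x_l)$ dictated by $\ska{l}$ is a ``can't-be'' value. Since $\ska{l}=\neg\cbr{1}{l}$ never picks a $\cbr{1}{l}$-forbidden value, the failure must be on the $0$-side: $x_l=1$ in $\pi$ (so $\ska{l}$ chose $1$), yet $1$ is forbidden, meaning $\pi\models\cb{1}{l}$; and $1$ was nonetheless chosen because $\pi\not\models\cbr{0}{l}$ (otherwise $\cbr{0}{l}$, being a refinement tracked by the algorithm, would have no bearing --- one argues $\pi\models\neg\cbr{0}{l}$ directly from how $l$ is selected as the topmost point of failure below which $\ska{j}=\sk_j$ can be extended). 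Hence $x_l=1$ in $\pi$ and $\pi\models\cb{1}{l}\wedge\neg\cbr{0}{l}$. The main work is in (b); (c) and (d) are then bookkeeping on top of it, with Proposition~\ref{prop:skolem-space} doing the heavy lifting for (c).
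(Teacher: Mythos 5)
There is a genuine gap, and it is exactly at the place you flag as ``delicate'' in part (b): your plan is to assume $\pi\not\models\cbr{1}{k}\wedge\cbr{0}{k}$ for every $k<n$ and deduce $\pi\models F(X,Y)$ by a downward induction in which ``the chosen value $\neg\cbr{1}{i}(\pi)$ avoids the can't-be constraint.'' But the only relation available between refined and exact functions is $\cbr{1}{i}\Rightarrow\cb{1}{i}$, which points the wrong way: $\pi\not\models\cbr{1}{i}$ (the reason $\ska{i}$ picks $1$) gives no information about $\neg\cb{1}{i}$, so the inductive claim that the chosen values keep $\exists x_1\ldots x_{i-1}\,F$ satisfied collapses at the very first step. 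Indeed, part (b) is \emph{false} if one only knows that the $\cbr{0}{i},\cbr{1}{i}$ are refinements: if all these sets were empty (so $\ska{i}\equiv\true$), then for $F=\neg x_1\vee\neg x_2$ the error formula is satisfiable, yet no $k$ has $\pi\models\cbr{1}{k}\wedge\cbr{0}{k}$. Any correct proof must therefore use what \textsc{InitAbsRef} actually places in these sets, and this is the paper's key step, absent from your proposal: from $\pi\models\neg F$ pick a falsified factor $f^j$, let $x_k$ be the lowest-indexed variable of $\Sup(f^j)$, and use that $\neg f^j[\subst{x_k}{0}]\in\cbr{0}{k}$ and $\neg f^j[\subst{x_k}{1}]\in\cbr{1}{k}$; if $x_k=1$ in $\pi$ then $\pi\models\cbr{1}{k}$, forcing $\ska{k}=0$ and contradicting $x_k\Leftrightarrow\ska{k}$, so $x_k=0$, whence $\pi\models\cbr{0}{k}$ and, from $x_k\Leftrightarrow\neg\cbr{1}{k}$, also $\pi\models\cbr{1}{k}$; part (a) then yields $k<n$.

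Parts (c) and (d) as sketched also do not go through. Proposition~\ref{prop:skolem-space} does not require $\cb{1}{k}\Rightarrow\neg\sk_k$ and $\cb{0}{k}\Rightarrow\sk_k$: it constrains a Skolem function only where exactly one of $\cb{0}{k},\cb{1}{k}$ holds, and where both hold (i.e., $\exists x_1\ldots x_k\,F$ is false) any value is permitted --- $\neg\cb{1}{k}$ is itself a Skolem function and violates your claimed constraint there. The actual contradiction for (c) comes from the $X'$-witness: $\pi\models F(X',Y)$ makes $F[\subst{x_1}{\sk_1}]\cdots[\subst{x_n}{\sk_n}]$ true under $\pi$, the assumed agreement for $j>k$ together with $x_j\Leftrightarrow\ska{j}$ collapses this to $F[\subst{x_1}{\sk_1}]\cdots[\subst{x_k}{\sk_k}]$ true under $\pi$, contradicting $\pi\models\cb{0}{k}\wedge\cb{1}{k}$, i.e., $\pi\models\neg\exists x_1\ldots x_k\,F$. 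For (d), your ``largest failing index'' argument needs two facts you do not establish: that a failing index in $\set{k+1,\ldots,n}$ exists at all, and that $\pi\models\neg\cbr{0}{l}$ there; since $\ska{l}=\neg\cbr{1}{l}$ is chosen independently of $\cbr{0}{l}$, the assertion that ``$1$ was chosen because $\pi\not\models\cbr{0}{l}$'' is unfounded. The problematic case $x_l=1$ with $\pi\models\cb{1}{l}\wedge\cbr{0}{l}$ must be handled by noting that then both exact can't-be functions hold at $l$ and re-running the argument on $\set{l+1,\ldots,n}$, terminating via part (a); this iteration (the paper's route) is the missing content.
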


\fullversion
{
\begin{proof}
\noindent{\underline{Part (a):}} Consider an assignment $\pi'$ of variables
in $X \cup Y$, such that $\pi'(x_i) = \pi(x_i')$ for all $x_i \in X$,
and $\pi'(y_j) = \pi(y_j)$ for all $y_j \in Y$.  Since $\pi \models
\varepsilon$, by definition of $\varepsilon$, we have $\pi \models
F(X', Y)$. This implies that $\pi' \models F(X, Y)$ and hence, $\pi'
\models \exists x_1 \ldots x_{n-1}\, F$.  If $x_n = 1$ in $\pi'$, we
get $\pi' \models \left(\exists x_1\ldots
x_{n-1}\,F\right)[\subst{x_n}{1}]$, or equivalently, $\pi' \models
\neg\cb{1}{n}$.  If $x_n = 0$ in $\pi'$, by a similar argument, $\pi'
\models \neg\cb{0}{n}$.  Therefore, $\pi' \models \neg\cb{1}{n} \vee
\neg\cb{0}{n}$.  Since $x_n$ is the variable with the highest index in
$X$, both $\cb{1}{n}$ and $\cb{0}{n}$ have only $Y$ as their support.
Since $\pi'(y_j) = \pi(y_j)$ for all $y_j \in Y$, it follows that $\pi
\models \neg\cb{1}{n} \vee \neg\cb{0}{n}$ as well.

\noindent{\underline{Part (b):}} Since $\pi \models \varepsilon$, by
definition of $\varepsilon$, we have $\pi \models \neg F(X, Y)$.  Since
$F = \bigwedge_{q=1}^r f^q$, there exists $j \in \{1, \ldots, r\}$ such
that $\pi \models \neg f^j$.  Without loss of generality, assume that
$\Sup(f^j) \neq \emptyset$ (otherwise, $f^j$ can be removed from
$\bigwedge_{q=1}^r f^q$). Let $x_k$ be the variable with the smallest
index in $\Sup(f^j)$.  We claim that $x_k = 0$ in $\pi$, and prove
this by contradiction.

If possible, let $x_k = 1$ in $\pi$.  Then, $\pi \models (\neg
f^j)[\subst{x_k}{1}]$.  Since $x_k$ is the lowest indexed variable in
$\Sup(f^j)$, it follows from algorithm \textsc{InitAbsRef} that $(\neg
f^j)[\subst{x_k}{1}] \in \cbr{1}{k}_{init}$, when
$\cbr{1}{k}_{init}$ is viewed as a set. This implies that $(\neg
f^j)[\subst{x_k}{1}] \Rightarrow \cbr{1}{k}_{init}$, when
$\cbr{1}{k}_{init}$ is viewed as a function.  Hence, $\pi \models
\cbr{1}{k}_{init}$, and since $\cbr{1}{k}_{init} \Rightarrow
\cbr{1}{k}$, we have $\pi \models \cbr{1}{k}$.  By definition of
$\varepsilon$, we also have $\pi \models (x_k \Leftrightarrow
\ska{k})$, where $\ska{k} = \neg\cbr{1}{k}$.  It follows that $x_k =
\ska{k} = 0$ in $\pi$.  This contradicts our assumption ($x_k = 1$),
and hence $x_k$ must be $0$ in $\pi$.

Since $x_k = 0$ in $\pi$, following the same reasoning as above, we
can show that $\pi \models \cbr{0}{k}$.  Furthermore, since $\pi
\models (x_k \Leftrightarrow \ska{k})$ and $\ska{k} = \neg\cbr{1}{k}$,
having $x_k = 0$ in $\pi$ implies that $\pi \models
\cbr{1}{k}$. Hence, $\pi \models \cbr{0}{k} \wedge \cbr{1}{k}$.  Since
$\cbr{1}{k} \Rightarrow \cb{1}{k}$ and $\cbr{0}{k} \Rightarrow
\cb{0}{k}$, we have $\pi \models \cb{0}{k} \wedge \cb{1}{k}$ as well.
It now follows from part (a) that $k \neq n$ and hence $k \in \{1,
\ldots, n-1\}$

\noindent{\underline{Part (c):}} We prove this by contradiction.  If
possible, let there be a Skolem function vector $\Sk$ such that $\sk_i
\Leftrightarrow \ska{i}$ for all $i$ in $\set{k+1, \ldots, n}$.  Since
$\pi \models F(X', Y)$, it follows that $\pi \models \exists x_1
\ldots, x_n\, F$.  Therefore, by definition of Skolem functions, $\pi
\models \left(\cdots\left(F[\subst{x_1}{\sk_1}]\right) \cdots
        [\subst{x_n}{\sk_n}]\right)$.  Since we have assumed $\sk_i
        \Leftrightarrow \ska{i}$ for all $i$ in $\set{k+1, \ldots, n}$
        and since $\pi \models \bigwedge_{i=1}^n (x_i \Leftrightarrow
        \ska{i})$, it follows that $\pi
        \models\left(\cdots\left(F[\subst{x_1}{\sk_1}]\right)
        \cdots[\subst{x_{k}}{\sk_{k}}]\right)$.  However, we know from
        part (b) that $\pi \models \cbr{0}{k} \wedge \cbr{1}{k}$ and
        hence $\pi \models \cb{0}{k} \wedge \cb{1}{k}$.  Recalling the
        definitions of $\cb{0}{k}$ and $\cb{1}{k}$, we get $\pi
        \models (\neg\exists x_1\ldots x_k\, F)$.  This contradicts
        our inference above, i.e., $\pi
        \models\left(\cdots\left(F[\subst{x_1}{\sk_1}]\right)
        \cdots[\subst{x_{k}}{\sk_{k}}]\right)$.  Hence our assumption
        is wrong, i.e., there is no Skolem function vector $\Sk$ such
        that $\sk_i \Leftrightarrow \ska{i}$ for all $i$ in $\set{k+1,
          \ldots, n}$.

\noindent {\underline{Part (d):}} We prove this by contradiction.  If
possible, suppose $x_l = 0$ in $\pi$, or $\pi \models \neg\cb{1}{l}
\vee \cbr{0}{l}$ for all $l \in \set{k+1, \ldots, n}$.  For convenience
of notation, let us call this
assumption $\mathsf{A}$ in the discussion below.

If $x_l = 0$ in $\pi$, then since $\pi \models \bigwedge_{i=1}^n (x_i
\Leftrightarrow \ska{i})$ and $\ska{i} = \neg\cbr{1}{i}$ for all $i
\in \set{1, \ldots, n}$, it follows that $\pi \models \cbr{1}{l}$.
Since $\cbr{1}{l} \Rightarrow \cb{1}{l}$, we have $\pi \models
\cb{1}{l}$ as well.  It is also easy to see that whenever $\pi \models
\neg\cb{1}{l}$, then $\pi \models \neg\cbr{1}{l}$ as well.  Therefore,
if $x_l = 0$ in $\pi$ or if $\pi \models \neg\cb{1}{l}$, then both
$\cb{1}{l}$ and $\cbr{1}{l}$ evaluate to the same value under $\pi$.

Consider the subcase of assumption $\mathsf{A}$ where $x_l = 0$ in
$\pi$, or $\pi \models \neg\cb{1}{l}$, for all $l \in \set{k+1, \ldots,
  n}$.  From the discussion above, either $\pi \models \cb{1}{l}
\wedge \cbr{1}{l}$ or $\pi \models \neg\cb{1}{l} \wedge
\neg\cbr{1}{l}$ for all $l \in \set{k+1, \ldots, n}$.  Now consider the
Skolem function vector $\Sk$ given by Proposition~\ref{cor:xmaps1}.
Since $\sk_l = \neg\cb{1}{l}$ and $\ska{i} = \neg\cbr{1}{l}$, it
follows that there exists a Skolem function vector, viz. $\Sk$, such
that $\sk_l \Leftrightarrow \ska{l}$ for all $l$ in $\set{k+1, \ldots,
  n}$.  This contradicts the assertion in part (c) above.  Hence we
cannot have $x_l = 0$ in $\pi$ or $\pi \models \neg\cb{1}{l}$, for all
$l \in \set{k+1, \ldots, n}$.

If assumption $\mathsf{A}$ has to hold, there must therefore exist
some $l \in \set{k+1, \ldots, n}$ such that $x_l = 1$ in $\pi$ and $\pi
\models \cb{1}{l}\wedge\cbr{0}{l}$.  Since $\cbr{0}{l} \Rightarrow
\cb{0}{l}$, we must have $\pi \models \cb{1}{l}\wedge\cb{0}{l}$ in
this case.  From part (a), we know that $\pi \models \neg\cb{0}{n}
\wedge \neg\cb{1}{n}$.  It follows that $l$ is strictly less than $n$,
and we can repeat the entire argument above with assumption
$\mathsf{A}$ restricted to indices in $\set{l+1, \ldots, n}$.  Note
that $\set{l+1,\ldots, n}$ is non-empty (since $l < n$), and is a
strict subset of $\set{k+1, \ldots, n}$ (since $l \in \set{k+1, \ldots,
  n}$).  Therefore, restricting assumption $\mathsf{A}$ to smaller
subsets of indices can only be done finitely many times, after which
there won't be any $l$ in the set of indices under consideration such
that $x_l = 1$ in $\pi$ and $\pi \models \cb{1}{l}\wedge\cbr{0}{l}$.
This shows that assumption $\mathsf{A}$ is false, thereby proving the
assertion in part (d).  
\end{proof}
}

\begin{algorithm}[t]
 \caption{\textsc{UpdateAbsRef}}
 \label{alg:update-abs-ref}
  \KwIn{$\cbr{0}{i}$ and $\cbr{1}{i}$ for all $x_i$ in $X$, \\
        \quad \quad \quad Satisfying assignment $\pi$ of error formula, i.e., \\
        \quad \quad \quad \quad $F(X',Y) \wedge \bigwedge_{i=1}^n\left(x_i \Leftrightarrow \ska{i}\right) \wedge \neg F(X, Y)$ } 
  \KwOut{Improved (i.e., refined) $\Ska = (\ska{1}, \ldots, \ska{n})$, \\
         \quad \quad \quad Improved (i.e., abstracted) $\cbr{0}{i}$ $\&$ $\cbr{1}{i}$,  
$\forall x_i\in X$} 
  $k :=$ largest $m$ such that $\pi$ satisfies $\cbr{0}{m} \wedge \cbr{1}{m}$\;
  \dontprintsemicolon
  $\mu_0 :=$ \textsc{Generalize}($\pi$, $\cbr{0}{k}$)\; 
  $\mu_1 :=$ \textsc{Generalize}($\pi$, $\cbr{1}{k}$)\; 
  $\mu := \mu_0 \wedge \mu_1$\; 
  \printsemicolon
  \tcp{Search for Skolem function among $\set{\ska{k+1}, \ldots, \ska{n}}$ to be refined} 
  $l :=$ $k+1$\;
  \While(\tcp*[f]{current guess: \Red{refine $\ska{l}$}}) {\true} 
      {
          \If{$x_l \in \Sup(\mu)$}
             {
               \If {$x_l = 1$ in $\pi$}
                  { 
                    \dontprintsemicolon
                    $\mu_1 := \mu[\subst{x_l}{1}]$\; 
                    $\cbr{1}{l} :=$ $\cbr{1}{l}$ $\cup$ $\set{\mu_1}$\; 
                    \printsemicolon
                    \If {$\pi$ satisfies  $\cbr{0}{l}$}
                       { 
                         \dontprintsemicolon
                         $\mu_0 := $ \textsc{Generalize}($\pi$, $\cbr{0}{l}$)\;
                         $\mu :=$ $\mu_0 \wedge \mu_1$\; 
                         \printsemicolon
                       }
                     \Else {
                         \dontprintsemicolon
                         {\bfseries break}\; 
                         \printsemicolon
                      }
                  }
                  \Else {
                    \dontprintsemicolon
                    $\mu_0:= \mu[\subst{x_l}{0}]$\; 
                    $\cbr{0}{l} :=$ $\cbr{0}{l}$ $\cup$ $\set{\mu_0}$\; 
                    $\mu_1 := $ \textsc{Generalize}($\pi$, $\cbr{1}{l}$)\;
                    $\mu :=$ $\mu_0 \wedge \mu_1$\; 
                   \printsemicolon
                  }
             }
             $l$ $:=$ $l+1$ \;
        }
        $\Ska = (\neg\cbr{1}{1}, \ldots, \neg\cbr{1}{n})$\;
        \Return {$\cbr{0}{i}$ and $\cbr{1}{i}$ for all $x_i$ in $X$, and $\Ska$}
\end{algorithm}

Algorithm~\ref{alg:update-abs-ref} (\textsc{UpdateAbsRef}) uses
Lemma~\ref{lem:basis-for-update} to compute abstract versions of
$\cbr{0}{i}$ and $\cbr{1}{i}$, and a refined version of $\Ska$, when
$\Ska$ is not a Skolem function vector. It takes as input the current versions of $\cbr{0}{i}$ and $\cbr{1}{i}$ for all $x_i$ in
$X$, and a satisfying assignment $\pi$ of the error formula for the
current version of $\Ska$.  Since $\pi \models F(X',Y)$ and $\pi
\models \neg F(X, Y)$, and since the value of every $x_i$ in $\pi$ is
given by $\ska{i}$, there exists at least one $\ska{l}$, for $l \in
\set{1, \ldots, n}$, that fails to generate the right value of $x_l$
when the value of $Y$ is as given by $\pi$.  \textsc{UpdateAbsRef}
works by identifying such an index $l$ and refining $\ska{l}$.  Since
$\ska{i} = \neg \cbr{1}{i}$, $\ska{l}$ is refined by updating (abstracting) the corresponding $\cbr{1}{l}$ set.  In
fact, the algorithm may, in general, end up abstracting not only
$\cbr{1}{l}$, but several $\cbr{0}{i}$ and $\cbr{1}{i}$ as well in
a sound manner.

As shown in Algorithm~\ref{alg:update-abs-ref}, \textsc{UpdateAbsRef}
first finds the largest index $k$ such that $\pi \models
\cbr{0}{k}\wedge\cbr{1}{k}$.  Lemma~\ref{lem:basis-for-update}b
guarantees the existence of such an index in $\set{1, \ldots, n}$.  We
assume access to a function called \textsc{Generalize} that takes as
arguments an assignment $\pi$ and a function $\varphi$ such that $\pi
\models \varphi$, and returns a function $\xi$ that generalizes $\pi$
while satisfying $\varphi$.  More formally, if $\xi =$
\textsc{Generalize}($\pi$, $\varphi$), then $\Sup(\xi) \subseteq
\Sup(\varphi)$, $\pi \models \xi$ and $\xi \Rightarrow \varphi$
(details of \textsc{Generalize} used in our implementation are
discussed later).  Thus, in steps $2$ and $3$ of
\textsc{UpdateAbsRef}, we compute generalizations of $\pi$ that
satisfy $\cbr{0}{k}$ and $\cbr{1}{k}$, respectively.  The function
$\mu$ computed in step $4$ is therefore such that $\pi \models \mu$
and $\mu \Rightarrow \cbr{0}{k} \wedge \cbr{1}{k}$.  Since $\cbr{0}{k}
\wedge \cbr{1}{k} \Rightarrow \neg\exists x_1\ldots x_{k} F$, any
abstract Skolem function vector that produces values of $x_1, \ldots,
x_n$ (given the valuation of $Y$ as in $\pi$) for which $\mu$ evaluates to
$1$, cannot be a Skolem function vector.  Since the support of $\mu$
is $\set{x_{k+1}, \ldots, x_n} \cup Y$, one of the abstract Skolem
functions $\ska{k+1}, \ldots, \ska{n}$ must be refined.

The loop in steps $6$--$21$ of \textsc{UpdateAbsRef} tries to identify an
abstract Skolem function $\ska{l}$ to be refined, by iterating $l$
from $k+1$ to $n$.  Clearly, if $x_l \not\in \Sup(\mu)$, the value of
$\ska{l}$ under $\pi$ is of no consequence in evaluating $\mu$, and we
ignore such variables.  If $x_l \in \Sup(\mu)$ and if $x_l = 1$ in
$\pi$, then $\pi \models \mu[\subst{x_l}{1}]$ and $\mu[\subst{x_l}{1}]
\Rightarrow (\neg\exists x_1\ldots x_{l-1} F)[\subst{x_l}{1}]$.
Recalling the definition of $\cb{1}{l}$, we have $\mu[\subst{x_l}{1}]
\Rightarrow \cb{1}{l}$, and therefore $\mu[\subst{x_l}{1}]$ can be
added to $\cbr{1}{l}$ (viewed as a set) yielding a more abstract
version of $\cbr{1}{l}$.  Steps $8$--$10$ of \textsc{UpdateAbsRef}
implement this update of $\cbr{1}{l}$. 
Note that since $\pi \models \mu[\subst{x_l}{1}]$, we have $\pi
\models \cbr{1}{l}$ after step $10$.  If it so happens that $\pi
\models \cbr{0}{l}$ as well, then we have $\pi \models \cbr{0}{l}
\wedge \cbr{1}{l}$, where $\cbr{1}{l}$ refers to the updated
refinement of $\cb{1}{l}$.  In this case, we have effectively found an
index $l > k$ such that $\pi \models \cbr{0}{k} \wedge \cbr{1}{k}$.
We can therefore repeat our algorithm starting with $l$ instead of
$k$.  Steps $11$--$13$ followed by step $21$ of algorithm
\textsc{UpdateAbsRef} effectively implement this.  If, on the other
hand, $\pi \not\models \cbr{0}{k}$, then we have found an $l$ that
satisfies the conditions in Lemma~\ref{lem:basis-for-update}d.  We
exit the search for an abstract Skolem function in this case (see
steps $14$--$15$).

If $x_l = 0$ in $\pi$, a similar argument as above shows that
$\mu[\subst{x_l}{0}]$ can be added to $\cbr{0}{l}$.  Steps $17$--$18$ of
\textsc{UpdateAbsRef} implement this update.  As before, it is easy to
see that $\pi \models \cbr{0}{l}$ after step $18$.  Moreover, since
$\pi \models \bigwedge_{i=1}^n(x_i \Leftrightarrow \ska{i})$ and
$\ska{i} \equiv \neg\cbr{1}{l}$, in order to have $x_l = 0$ in $\pi$,
we must have $\pi \models \cbr{1}{l}$.  Therefore, we have once again
found an index $l > k$ such that $\pi \models \cbr{0}{k} \wedge
\cbr{1}{k}$, and can repeat our algorithm starting with $l$ instead of
$k$.  Steps $19$--$21$ of algorithm \textsc{UpdateAbsRef} effectively
implement this. 

Once we exit the loop in steps $6$--$21$ of \textsc{UpdateAbsRef}, we
compute the refined Skolem function vector $\Ska$ as $(\neg\cbr{1}{1},
\ldots \neg\cbr{1}{n})$ in step $22$ and return the updated
$\cbr{0}{i}$, $\cbr{1}{i}$ for all $x_i$ in $X$, and also $\Ska$.

\fullversion{\begin{lemma}
\label{lem:update-absref-correct}
Algorithm \textsc{UpdateAbsRef} always terminates, and renders 
at least one $\cbr{1}{i}$ strictly abstract, and at least
one $\ska{i}$ strictly refined, for $i \in \set{1, \ldots, n}$.
\end{lemma}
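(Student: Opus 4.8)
The plan is to isolate an invariant of the \textbf{while} loop in steps $6$--$21$, use it together with Lemma~\ref{lem:basis-for-update}(a) to show that the loop always exits through the \textbf{break} of step $15$ at some index $l\le n$, and then observe that taking that \textbf{break} is precisely the step that strictly abstracts a set $\cbr{1}{l}$ and strictly refines the corresponding $\ska{l}$. Throughout I will also carry along the easily-maintained side fact --- true after \textsc{InitAbsRef} by Lemma~\ref{lem:initabsref-correct}, and preserved by every update in \textsc{UpdateAbsRef} --- that each function ever placed into $\cbr{0}{i}$ (resp.\ $\cbr{1}{i}$) implies $\cb{0}{i}$ (resp.\ $\cb{1}{i}$), so $\cbr{0}{i}\Rightarrow\cb{0}{i}$ and $\cbr{1}{i}\Rightarrow\cb{1}{i}$ hold at all times.

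First I would prove, by induction over the loop iterations, the invariant: at the start of the iteration with loop variable $l$ (and $k+1\le l\le n$), (i)~$\pi\models\mu$, (ii)~$\mu\Rightarrow\neg\exists x_1\ldots x_{l-1}\,F$, and (iii)~$\Sup(\mu)\subseteq\set{x_l,\ldots,x_n}\cup Y$. The base case $l=k+1$ follows from steps $1$--$4$: $k$ is well defined with $\pi\models\cbr{0}{k}\wedge\cbr{1}{k}$ and $k\le n-1$ by Lemma~\ref{lem:basis-for-update}(a),(b), so \textsc{Generalize} yields $\pi\models\mu$ with $\mu\Rightarrow\cbr{0}{k}\wedge\cbr{1}{k}\Rightarrow\cb{0}{k}\wedge\cb{1}{k}=\neg\exists x_1\ldots x_k\,F$, and (iii) comes from the shape of the sets produced by \textsc{InitAbsRef}. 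For the inductive step I would split along the four cases of the loop body. The only nonroutine point is the skip case $x_l\notin\Sup(\mu)$: $\mu$ is unchanged, and because $\mu$ does not mention $x_l$, (ii) strengthens from $\mu\Rightarrow\neg\exists x_1\ldots x_{l-1}F$ to $\mu\Rightarrow\neg\exists x_1\ldots x_l\,F$, exactly what the next iteration needs. In the two ``restart'' cases ($x_l=1$ with $\pi\models\cbr{0}{l}$, or $x_l=0$) the new $\mu$ is the conjunction of a cofactor of the old $\mu$ with a \textsc{Generalize} output; using (ii) for the current index, these conjuncts imply $\cb{1}{l}$ and $\cb{0}{l}$ respectively, and $\cb{0}{l}\wedge\cb{1}{l}=\neg\exists x_1\ldots x_l\,F$ gives (ii) for index $l+1$. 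The fourth case breaks, so it needs no next-iteration invariant; $\pi\models\mu$ and (iii) are immediate in each non-breaking case.

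Next I would show that the loop breaks at some $l\le n$. Since every non-breaking iteration performs $l:=l+1$, it suffices to check that the iteration $l=n$, if reached, breaks. If $x_n\notin\Sup(\mu)$, then (ii) strengthened and $\pi\models\mu$ give $\pi\models\neg\exists x_1\ldots x_n\,F$, contradicting $\pi\models F(X',Y)$ (which yields $\pi\models\exists x_1\ldots x_n\,F$). If $x_n=0$ in $\pi$, then (ii) gives $\pi\models\cb{0}{n}$ while the $x_n\Leftrightarrow\ska{n}$ clause of $\varepsilon$ (with $\ska{n}=\neg\cbr{1}{n}$) forces $\pi\models\cbr{1}{n}$, hence $\pi\models\cb{1}{n}$ --- contradicting Lemma~\ref{lem:basis-for-update}(a). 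So $x_n=1$ in $\pi$; steps $9$--$10$ then add $\mu[\subst{x_n}{1}]$ to $\cbr{1}{n}$, after which $\pi\models\cbr{1}{n}$ and so $\pi\models\cb{1}{n}$, whence by Lemma~\ref{lem:basis-for-update}(a) $\pi\not\models\cb{0}{n}$ and therefore $\pi\not\models\cbr{0}{n}$: the \textbf{else} of step $14$ fires and we \textbf{break}. Consequently the loop terminates at some $l\in\set{k+1,\ldots,n}$ without ever indexing $\cbr{0}{n+1}$ or $\cbr{1}{n+1}$; since steps $1$--$4$, step $22$, and each \textsc{Generalize} call do a bounded amount of work, \textsc{UpdateAbsRef} terminates. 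Finally, the loop is left only through that \textbf{break}, which occurs immediately after step $10$ inserted $\mu_1=\mu[\subst{x_l}{1}]$ into the set $\cbr{1}{l}$: since $\pi\models\mu$ and $x_l=1$ in $\pi$ we get $\pi\models\mu_1$, so $\pi$ satisfies the new $\cbr{1}{l}$; but $x_l=1$ with $\pi\models(x_l\Leftrightarrow\neg\cbr{1}{l})$ forces $\pi\not\models$ the old $\cbr{1}{l}$. Hence the new $\cbr{1}{l}$ is strictly more abstract than the old one, and $\ska{l}=\neg\cbr{1}{l}$ is strictly refined, with $l\in\set{1,\ldots,n}$ --- which is exactly the claim.

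The hard part will be the loop invariant and the $l=n$ analysis: one has to spot that the independence of $\mu$ from $x_l$ upgrades (ii) by one quantified variable in the skip case, and that at $l=n$ a ``restart'' is impossible so the loop is forced to \textbf{break}. This is where Lemma~\ref{lem:basis-for-update}(a) and the clause $x_n\Leftrightarrow\ska{n}$ of the error formula really do the work; everything else is bookkeeping.
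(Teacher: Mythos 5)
Your proof is correct and follows essentially the same route as the paper's: Lemma~\ref{lem:basis-for-update}(a) (together with the fact that every element added to $\cbr{0}{i}$/$\cbr{1}{i}$ still implies $\cb{0}{i}$/$\cb{1}{i}$) rules out a non-breaking iteration at $l=n$, forcing the \textbf{break} at some $l\le n$, and the strictness claim comes from observing that before step $10$ the clause $x_l\Leftrightarrow\ska{l}=\neg\cbr{1}{l}$ with $x_l=1$ forces $\pi\not\models\cbr{1}{l}$, while after step $10$ we have $\pi\models\cbr{1}{l}$. You are somewhat more explicit than the paper in formalizing the loop invariant on $\mu$ and in disposing of the corner case $x_n\notin\Sup(\mu)$ at $l=n$, which the paper's terse termination argument leaves implicit.
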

\begin{proof}
By Lemma~\ref{lem:basis-for-update}a, we know that $\pi \models
\neg\cb{0}{n} \vee \neg\cb{1}{n}$, and therefore $\pi \models
\neg\cbr{0}{n} \vee \neg\cbr{1}{n}$. Since steps $12$--$13$ or
$17$-$20$ of \textsc{UpdateAbsRef} can be executed only when $\pi
\models \cbr{0}{l} \wedge \cbr{1}{l}$, and since $l$ is incremented in
every iteration of the loop in steps $6$--$21$, it follows that steps
$14$--$15$ must be executed for some $l \le n$.  Therefore, algorithm
\textsc{UpdateAbsRef} always terminates.

It is easy to see from the pseudocode of algorithm
\textsc{UpdateAbsRef} that steps $7$--$10$ and $14$-$15$ must be
executed before exiting the while loop (steps $6$--$21$) and
terminating.  Before executing step $10$, we have $x_l = 1$ in $\pi$
and $\pi \models \bigwedge_{i=1}^n(x_i \Leftrightarrow \ska{i})$ .
Since $\ska{l} \equiv \neg\cbr{1}{l}$ before step $10$, with $x_l = 1$
in $\pi$, it must be the case that $\pi \models \neg\cbr{1}{l}$ before
step $10$.  However, since $\pi \models \mu[\subst{x_l}{1}]$ in step
$9$, we have $\pi {\models} \cbr{1}{l}$ after step $10$.  Therefore,
executing step $10$ renders $\cbr{1}{l}$ strictly abstract than
what it was earlier.  This also implies that $\ska{l} \equiv
\neg\cbr{1}{l}$ is strictly refined when \textsc{UpdateAbsRef}
returns in step~$23$.  
\end{proof}
}

\setcounter{example}{0}
\begin{example}[Continued]
\Red{Continuing with our earlier example, the error formula after the
  first step has a satisfying assignment
  $y_1=1,y_2=1,y_3=0,x_1=0,x_2=1,x_1'=0,x_2'=0$. Using this for $\pi$
  in \textsc{UpdateAbsRef}, we find that $\ska{1}$ is left unchanged
  at $(\neg x_2 \vee \neg y_1)$, while $\ska{2}$, which was $\true$
  earlier, is refined to $(\neg y_1\vee y_3)$. With these refined
  Skolem functions, $F(\ska{1}, \ska{2},Y)$ evaluates to $\true$ for
  all valuations of $Y$. As a result, the (new) error formula becomes
  unsatisfiable, confirming the correctness of the Skolem functions.}
\end{example}
 The overall
$\cegar$ algorithm can now be implemented as depicted in
Algorithm~\ref{alg:overall-cegar}. From the above discussion and
Lemmas~\ref{lem:initabsref-correct}, \ref{lem:term} and \ref{lem:update-absref-correct}, we obtain our
main result.

\begin{algorithm}[t]
 \caption{$\cegar$}
 \label{alg:overall-cegar}
  \KwIn{Propositional formula $F(X, Y) = \bigwedge_{j=1}^r f^j(X_j, Y_j)$, $X = (x_1, \ldots, x_n)$} 
  \KwOut{Skolem function vector $\Sk(Y)$ for $X$ in $F$} 
  $(\Ska$,$\set{\cbr{0}{i}, \cbr{1}{i}: 1 \le i \le n}) := \textsc{InitAbsRef}(\bigwedge_{j=1}^r f^j)$\;
  $\varepsilon :=$ $F(X', Y) \wedge \bigwedge_{i=1}^n (x_i \Leftrightarrow \ska{i}) \wedge \neg F(X, Y)$\;
  \While {$\varepsilon$ is satisfiable} {
     Let $\pi$ be a satisfying assignment of $\varepsilon$\;
     ($\Ska$, $\set{\cbr{0}{i}, \cbr{1}{i} \::\: 1 \le i \le n}$) $:=$ \textsc{UpdateAbsRef}($\set{\cbr{0}{i}, \cbr{1}{i} \::\: 1 \le i \le n}$, $\pi$)\;
  $\varepsilon :=$ $F(X', Y) \wedge \bigwedge_{i=1}^n (x_i \Leftrightarrow \ska{i}) \wedge \neg F(X, Y)$\;
 }
 $ \Sk(Y) := \textsc{ReverseSubstitute}(\neg\cbr{1}{1}, \ldots, \neg\cbr{1}{n})$\;
 \Return {$\Sk(Y)$}\;
\end{algorithm}

\begin{theorem}
\label{thm:cegar-correct}
$\cegar$($F(X, Y)$) terminates and computes a Skolem function vector for $X$ in $F$.
\end{theorem}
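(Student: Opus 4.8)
The plan is to decompose the statement into a \emph{partial-correctness} claim (whenever $\cegar$ halts, the vector it returns is a Skolem function vector for $X$ in $F$) and a \emph{termination} claim (the \textbf{while} loop of Algorithm~\ref{alg:overall-cegar} is entered only finitely often), and to obtain both by assembling Lemmas~\ref{lem:initabsref-correct}, \ref{lem:term} and \ref{lem:update-absref-correct} with the bookkeeping already established in Sections~\ref{subsec:initabsref}--\ref{subsec:cegar}.

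For partial correctness I would first record the loop invariant that, on entry to each iteration, $\Ska = (\neg\cbr{1}{1}, \ldots, \neg\cbr{1}{n})$ and $\varepsilon$ is the error formula for this $\Ska$. It holds initially because \textsc{InitAbsRef} sets $\ska{i} := \neg\cbr{1}{i}$ (Lemma~\ref{lem:initabsref-correct}), and is re-established by \textsc{UpdateAbsRef} (which resets $\Ska := (\neg\cbr{1}{1}, \ldots, \neg\cbr{1}{n})$ before returning) followed by the recomputation of $\varepsilon$. Hence the loop exits exactly when $\varepsilon$ is unsatisfiable, and Lemma~\ref{lem:term} then gives that $(\neg\cbr{1}{1}, \ldots, \neg\cbr{1}{n})$ is a Skolem function vector for $X$ in $F$; moreover, tracking the supports of the functions produced by \textsc{InitAbsRef} and \textsc{UpdateAbsRef} (exactly as in Proposition~\ref{cor:xmaps1}) shows $\Sup(\ska{i}) \subseteq \set{x_{i+1}, \ldots, x_n} \cup Y$. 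It then remains to argue that \textsc{ReverseSubstitute} (Algorithm~\ref{alg:reverse}) turns such a vector into a Skolem function vector $\Sk(Y)$ all of whose components have support $\subseteq Y$; this is precisely the transformation described after Proposition~\ref{cor:xmaps1} and used in Phase~2 of $\mono$, so I would appeal to (a formalization of) that observation.

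For termination I would exhibit a bounded, strictly increasing progress measure on the outer loop. Every function that \textsc{UpdateAbsRef} adds to a set $\cbr{1}{i}$ has support contained in $\set{x_{i+1}, \ldots, x_n}\cup Y$ (each is a restriction of a $\mu$-function built by \textsc{Generalize}), and the initial $\cbr{1}{i}$ produced by \textsc{InitAbsRef} has the same property; since \textsc{UpdateAbsRef} only \emph{adds} elements to these sets, each $\cbr{1}{i}$ stays a propositional function over the fixed finite variable set $\set{x_{i+1}, \ldots, x_n}\cup Y$ and only becomes logically weaker over time. Let $m_i$ denote the number of satisfying assignments of $\cbr{1}{i}$ over that variable set and put $M := \sum_{i=1}^{n} m_i$. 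Then $M$ is non-decreasing across any call to \textsc{UpdateAbsRef}, it is bounded above by $n \cdot 2^{|X \cup Y|}$, and by Lemma~\ref{lem:update-absref-correct} at least one $\cbr{1}{i}$ is made strictly weaker on each iteration, so $M$ strictly increases every time the loop body runs. Hence the loop runs only finitely many times; each iteration itself terminates by Lemma~\ref{lem:update-absref-correct} and decidability of the $\SAT$ check. Combining the two parts gives the theorem.

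The main obstacle I anticipate is not the termination measure --- once the ``\textsc{UpdateAbsRef} only enlarges each $\cbr{1}{i}$'' observation is in place, finiteness is routine --- but making the reverse-substitution phase genuinely rigorous: one must verify that, for a vector $(\ska{1}, \ldots, \ska{n})$ with $\Sup(\ska{i}) \subseteq \set{x_{i+1}, \ldots, x_n}\cup Y$ that satisfies $\exists x_1 \cdots x_n\, F \equiv (\cdots(F[\subst{x_1}{\ska{1}}]) \cdots [\subst{x_n}{\ska{n}}])$, replacing $\ska{i}$ by $\ska{i}[\subst{x_n}{\ska{n}}]$ for all $i < n$, and then iterating downward through $x_{n-1}, \ldots, x_2$, preserves that equivalence. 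I would isolate this as a small lemma proved by induction on the number of reverse substitutions already performed, using that in the sequential composition $x_j$ is ultimately replaced by $\ska{j}$ whether or not it was first substituted inside some earlier $\ska{i}$; this is the one place where the informal treatment in Section~\ref{sec:state-of-the-art} needs to be turned into a proof.
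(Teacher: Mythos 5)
Your proposal is correct and follows essentially the same route as the paper's proof: termination because every call to \textsc{UpdateAbsRef} strictly abstracts some $\cbr{1}{i}$, which, being a propositional function over a fixed finite variable set, can be weakened only finitely often (your measure $M$ is just an explicit version of the paper's finitely-many-minterms argument), and partial correctness because the loop exits only when the error formula is unsatisfiable, at which point Lemma~\ref{lem:term} applies. The only addition is that you spell out the correctness of \textsc{ReverseSubstitute} as a separate small lemma, a step the paper leaves implicit by appealing to the observation following Proposition~\ref{cor:xmaps1}.
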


\fullversion{
\begin{proof}
By Lemma~\ref{lem:update-absref-correct}, we know that every
invocation of \textsc{UpdateAbsRef} renders at least one $\cbr{1}{i}$
strictly abstract than what it was earlier.  Since $\cbr{1}{i}$ is a
propositional function, it has finitely many minterms and can be
rendered strictly abstract only finitely many times.  From
Proposition~\ref{cor:xmaps1}, we also know that $(\neg\cb{1}{1},
\ldots, \neg\cb{1}{n})$ is indeed a Skolem function vector, and
therefore by Lemma~\ref{lem:term}, its error formula is
unsatisfiable. The termination of $\cegar$ follows immediately from
the above observations.  Since $\varepsilon$ is unsatisfiable when
$\cegar$ terminates, it follows from Lemma~\ref{lem:term} that the
vector of functions returned is a Skolem function vector for $X$ in
$F$.
\end{proof}
}

The function \textsc{Generalize}($\pi$, $\varphi$) used in
\textsc{UpdateAbsRef} can be implemented in several ways.  Since $\pi
\models \varphi$, we may return a conjunction of literals
corresponding to the assignment $\pi$, or the function $\varphi$
itself.  From our experiments, it appears that the first option leads
to low memory requirements and increased run-time (due to large number
of invocations of \textsc{UpdateAbsRef}). The other option requires
more memory and less run-time due to fewer invocations of
\textsc{UpdateAbsRef}.  For our study, we let
\textsc{Generalize}($\pi$, $\cbr{1}{k}$) \Red{return one element
in $\cbr{1}{k}$ (viewed as a set) amongst all those that evaluate to
$1$ under $\pi$, such that the support of $\mu$ computed in Algorithm
\textsc{UpdateAbsRef} is minimized (we had to allow
\textsc{Generalize}$(\cdot,\cdot)$ access to $\mu$ for this purpose)}.
We follow a similar strategy for \textsc{Generalize}($\pi$,
$\cbr{0}{k}$).  This gives us a reasonable tradeoff between time and
space requirements.

\section{Experimental Results} \label{results}
\label{sec:expt}

\subsection{Experimental Methodology}
We compared $\cegar$ with (a) $\mono$ (the algorithm based on the
cofactoring approach of~\cite{Jian,Trivedi}) and with (b) $\bloqqer$
(a QRAT-based Skolem function generation tool reported
in~\cite{bierre}). As described in~\cite{bierre}, $\bloqqer$ generates
Skolem functions by first generating QRAT proofs using a remarkably
efficient (albeit incomplete) preprocessor, and then generates Skolem
functions from these proofs.

The Skolem function generation benchmarks were obtained by considering
sequential circuits from the HWMCC10 benchmark suite, and by reducing
the problem of disjunctively decomposing a circuit into components to
the problem of generating Skolem function vectors.  Details of how
these benchmarks were generated are described
in~\cite{benchmarks}.  Each benchmark is of the form $\exists X
F(X,Y)$, where $F(X,Y)$ is a conjunction of factors and $\exists Y
(\exists X F(X,Y))$ is $\true$. However, for some benchmarks, $\forall
Y (\exists X F(X,Y))$ does not evaluate to $\true$.  Since 
$\bloqqer$ can generate Skolem functions only when
$\forall Y (\exists X F(X,Y))$ is $\true$, 
 we divided the benchmarks into two categories: a) {\type1} where
 $\forall Y \exists X F(X,Y)$ is $\true$, and b) {\type2} where
 $\forall Y \exists X F(X,Y)$ is $\false$ (although $\exists Y \exists
 X F(X,Y)$ is $\true$). While we ran $\cegar$ and $\mono$ on all
 benchmarks, we ran $\bloqqer$ only on $\type1$ benchmarks. Further,
 since $\bloqqer$ required the input to be in \texttt{qdimacs} format,
 we converted each $\type1$ benchmark into \texttt{qdimacs} format
 using Tseitin encoding~\cite{tseitin68}. All our benchmarks can be
 downloaded from~\cite{benchmarks}.

Our implementations of $\mono$ and $\cegar$ make use of the
ABC~\cite{abc-tool} library to represent and manipulate functions as
AIGs.  For $\cegar$, we used the default SAT solver provided by ABC,
which is a variant of MiniSAT.
We used a simple heuristic to order the variables, 
 and used the same ordering for both $\mono$ and $\cegar$.
In our ordering, variables that occur in fewer factors are indexed lower than
those that occur in more factors. 

We used the following metrics to compare the performance of the
algorithms: (i) average/maximum size of the generated Skolem functions
in a Skolem function vector, where the size is the number of nodes in
the AIG representation of a function, and ii) total time taken to
generate the Skolem function vector (excluding any input format
conversion time).  The experiments were performed on a 1.87 GHz
Intel(R) Xeon machine with $128$GB memory running Ubuntu 12.04.4.  The
maximum time and main memory usage was restricted to 2 hours and
32GB, although we noticed that for most benchmarks, all three
algorithms used less than $2$ GB memory.

\subsection{Results and Discussion}
We conducted our experiments with $424$ benchmarks, of which $160$ were $\type1$ benchmarks and $264$ were $\type2$ benchmarks. 
The $424$ benchmarks covered a wide spectrum in terms of number of factors, total number of variables, and number of quantified variables. 
\fullversion{
For instance, in  the $\type1$ category, the number of factors varied
from $44$ to $7034$, total number of variables varied from $94$ to $9782$ and 
 the number of variables to eliminate varied from $60$ to $4751$.
  Amongst the $\type2$ benchmarks, the number of factors varied 
 varied from $24$ to $3956$,  the total number of varibles varied from $70$ to
 $5963$, and the variables to eliminate varied $21$ to $2689$.  
}
\subsubsection{$\cegar$ vs $\mono$}
The performance of these two algorithms on all the 
 benchmarks ($\type1$ and $\type2$) is shown in the scatter plots of
Figure~\ref{fig:cegar1}, where  Figure~\ref{fig:cegar_size} shows
the average sizes of Skolem functions generated in a Skolem function
vector and Figure~\ref{fig:cegar_time} shows the total time taken in seconds.
From Figure~\ref{fig:cegar_size}, it is clear that the Skolem
functions generated by $\cegar$ in a Skolem function vector are on
average {\em smaller} than those generated by $\mono$. {\it There is
  no instance on which $\cegar$ generates Skolem function vectors with
  larger functions on average vis-a-vis $\mono$}.
\begin{figure}[t]
\centering
\begin{subfigure}{.49\textwidth}
  \centering
  \includegraphics[angle=-90,  width=.9\linewidth, scale=0.27]{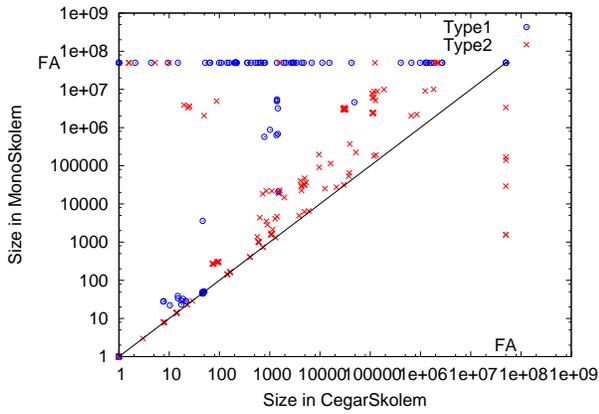}
  \caption{Average Skolem function sizes}
  \label{fig:cegar_size}
\end{subfigure}
\begin{subfigure}{.49\textwidth}
  \centering
  \includegraphics[angle=-90, width=.9\linewidth, scale=0.27]{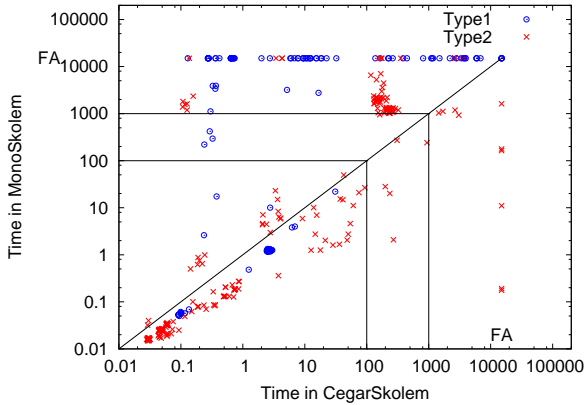}
  \caption{Time taken (in seconds) }
  \label{fig:cegar_time}
\end{subfigure}%
\caption{\small{$\cegar$ vs $\mono$ on $\type1$ \& $\type2$ benchmarks. Topmost (rightmost) points indicate benchmarks where $\mono$ ($\cegar$) was unsuccessful.} } 
\vspace*{-1em}
\label{fig:cegar1}
\end{figure}

Due to repeated calls to the SAT-solver, $\cegar$ takes more time than $\mono$ on some benchmarks, but on most of them the total time taken by both algorithms is {\it less than} $100$ seconds (Figure \ref{fig:cegar_time}). Indeed, on profiling we found that $\cegar$ spent most of its time on SAT solving. On $38$ benchmarks where $\cegar$ took greater than $100$ but less than $300$ seconds,
$\mono$ performed significantly worse, taking more than $1000$ seconds. We found the degradation of $\mono$ was due to the large sizes of Skolem functions generated (of the order of $1$ million AIG nodes) compared to those generated by $\cegar$  ($< 8000$ AIG nodes). \emph{Large Skolem function sizes clearly imply more time spent in function composition and reverse-substitution.}

For benchmarks where the sizes of Skolem
  functions generated were even larger (of the order of $10^7$ AIG nodes),
  $\mono$ could not complete generation of all Skolem functions:  for $8$ benchmarks, the memory consumed by $\mono$ increased rapidly, resulting
  in  memory outs;  for $10$ benchmarks, it ran out of time; for an
  overwhelming $83$ benchmarks, it encountered integer overflows (and
  hence assertion failures) in the underlying ABC library. These are
  indicated by the topmost points (see label ``FA'' on the axes)  in Figure \ref{fig:cegar1}. 
In contrast,  {\em  $\cegar$ generated Skolem functions for almost all} $(412/424)$ {\em benchmarks}.
  The rightmost points indicate the $12$ cases where $\cegar$ failed, of which $10$ were time-outs and $2$ were memory outs. 

\subsubsection{$\cegar$ vs $\bloqqer$}
\begin{figure}[t]
  \centering
\begin{subfigure}{.49\textwidth}
  \centering
  \includegraphics[angle=-90,  width=.9\linewidth, scale=0.25]{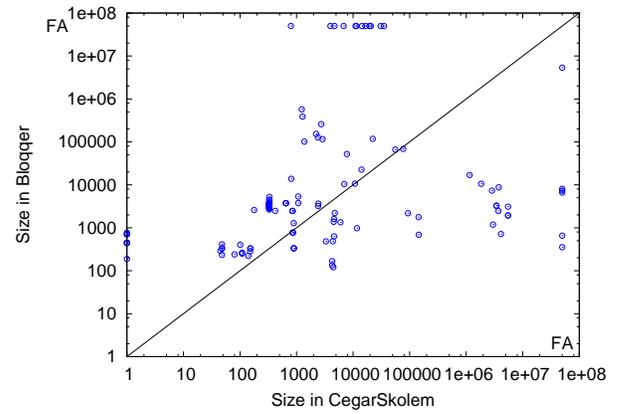}
  \caption{Maximum size of Skolem functions}
  \label{fig:cegar_bloqqer_size}
\end{subfigure}%

\begin{subfigure}{.49\textwidth}
  \includegraphics[angle=-90,  width=.9\linewidth, scale=0.27]{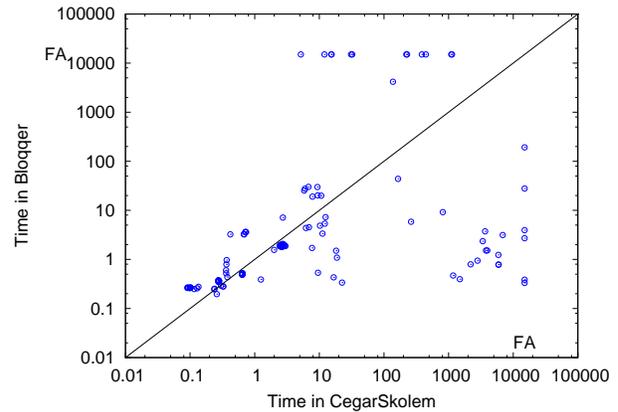}
  \caption{Time taken (in seconds)}
  \label{fig:cegar_bloqqer_time}
\end{subfigure}

\caption{\small{$\cegar$ vs $\bloqqer$ on $\type1$ benchmarks. Topmost (rightmost) points indicate benchmarks for which $\bloqqer$ ($\cegar$)  was unsuccessful.}} 
\vspace*{-1em}
\label{fig:cegarbloqqer1}
\end{figure}

Of the $160$ $\type1$ benchmarks, $\bloqqer$ successfully generated
Skolem function vectors in $148$ cases. It gave a {\texttt {NOT
    VERIFIED}} message for the remaining $12$ benchmarks (in less than
30 minutes). These benchmarks are indicated by the topmost points (see
label ``FA'' on the axes) in the scatter plots of Figure
\ref{fig:cegarbloqqer1}.  Of these, $8$ are large benchmarks with
$1000+$ factors and variables to eliminate (overall, there are $9$
such large benchmarks). On the other hand, $\cegar$ was able to
successfully generate Skolem functions on $154$ benchmarks, including
the $9$ large benchmarks, on each of which it took less than $20$
minutes.

For the $142$ benchmarks for which both algorithms succeeded, we compared the 
times taken in Figure~\ref{fig:cegar_bloqqer_time}. As earlier, $\cegar$ took more time on many benchmarks, but there were several benchmarks, including the large benchmarks, on which $\bloqqer$ was out-performed. 
We also compared the maximum sizes of Skolem functions generated in a Skolem function vector (see Figure~\ref{fig:cegar_bloqqer_size}). We used the maximum (instead of average) size, since Tseitin encoding was needed to convert the benchmarks to \texttt{qdimacs} format, and this introduces many variables whose Skolem function sizes are very small, skewing the average.
For a majority  ($108/142$) of the benchmarks where both algorithms succeeded, the maximum sizes of Skolem functions obtained by $\cegar$ were {\em  smaller} than those generated by $\bloqqer$. 
Hence, \emph{not only does $\cegar$ run faster on the large benchmarks, it also generates smaller Skolem functions on most of them}.

\subsubsection{Discussion}
  For all benchmarks on which $\cegar$  timed out, 
  we noticed that 
  there were large subsets of factors that shared many variables in
  their supports.  As a result, $\cegar$ could not exploit the
  factored representation effectively, requiring many refinements.
   We also noticed that for many benchmarks ($197/424$), the initial abstract Skolem functions were correct, and most of the time was spent in the SAT solver.
In fact, on averaging over all benchmarks, we found that around
$33\%$ of the time spent by $\cegar$ was for SAT-solving.  
This shows that we can leverage improvements in SAT solving technology
to improve the performance of $\cegar$.  

\section{Conclusion and Future Work}
\label{sec:conclusion}

We presented a CEGAR algorithm for generating Skolem functions from
factored propositional formulas.  Our experiments show that for
complex functions, our algorithm out-performs two
state-of-the-art algorithms. As part of future work, we will explore integration with more efficient SAT-solvers and refinement using multiple counter-examples.
\bibliographystyle{plain} 
\bibliography{papers}

\end{document}